\pdfoutput=1
\RequirePackage{ifpdf}
\ifpdf 
\documentclass[pdftex]{sigma}
\else
\documentclass{sigma}
\fi

\numberwithin{equation}{section}

\newtheorem{Theorem}{Theorem}[section]
\newtheorem{Corollary}[Theorem]{Corollary}
\newtheorem{Lemma}[Theorem]{Lemma}
\newtheorem{Proposition}[Theorem]{Proposition}
 { \theoremstyle{definition}
\newtheorem{Definition}[Theorem]{Definition}
\newtheorem{Example}[Theorem]{Example}
\newtheorem{Remark}[Theorem]{Remark} }

\begin{document}

\allowdisplaybreaks

\newcommand{\arXivNumber}{1608.04546}

\renewcommand{\PaperNumber}{008}

\FirstPageHeading

\ShortArticleName{Classical and Quantum Superintegrability of St\"ackel Systems}

\ArticleName{Classical and Quantum Superintegrability\\ of St\"ackel Systems}

\Author{Maciej B{\L}ASZAK~$^\dag$ and Krzysztof MARCINIAK~$^\ddag$}

\AuthorNameForHeading{M.~B{\l}aszak and K.~Marciniak}

\Address{$^\dag$~Faculty of Physics, Division of Mathematical Physics, \\
\hphantom{$^\dag$}~A.~Mickiewicz University, Pozna\'{n}, Poland}
\EmailD{\href{blaszakm@amu.edu.pl}{blaszakm@amu.edu.pl}}

\Address{$^\ddag$~Department of Science and Technology, Campus Norrk\"{o}ping, Link\"{o}ping University, Sweden}
\EmailD{\href{krzma@itn.liu.se}{krzma@itn.liu.se}}

\ArticleDates{Received September 18, 2016, in f\/inal form January 19, 2017; Published online January 28, 2017}

\Abstract{In this paper we discuss maximal superintegrability of both classical and quantum St\"ackel systems. We prove a suf\/f\/icient condition for a f\/lat or constant curvature St\"ackel system to be maximally superintegrable. Further, we prove a suf\/f\/icient condition for a~St\"ackel transform to
preserve maximal superintegrability and we apply this condition to our class of St\"ackel systems, which yields new maximally superintegrable systems as conformal deformations of the original systems. Further, we demonstrate how to perform the procedure of minimal quantization to considered systems in order to produce quantum superintegrable and quantum separable systems.}

\Keywords{Hamiltonian systems; classical and quantum superintegrable systems; St\"ackel systems; Hamilton--Jacobi theory; St\"ackel transform}

\Classification{70H06; 70H20; 81S05; 53B20}

\section{Introduction}

A real-valued function $h_{1}$ on a $2n$-dimensional manifold (phase space) $M=T^{\ast}Q$ is called a~classical maximally superintegrable Hamiltonian if it belongs to a set of $n$ Poisson-commuting functions $h_{1},\ldots,h_{n}$ (constants of motion, so that $\left\{ h_{i},h_{j}\right\} =0$ for all $i,j=1,\ldots,n$) and for which there exist $n-1$ additional functions $h_{n+1},\ldots,h_{2n-1}$ on $M$ that Poisson-commute with the Hamiltonian $h_{1}$ and such that all the functions $h_{1},\ldots,h_{2n-1}$ constitute a functionally independent set of functions. Analogously, a quantum maximally superintegrable Hamiltonian is a self-adjoint dif\/ferential operator $\hat {h}_{1}$ acting in an appropriate Hilbert space of functions on the conf\/iguration space $Q$ (square integrable with respect to some metric) belonging to a set of $n$ commuting self-adjoint dif\/ferential operators
$\hat{h}_{1},\ldots,\hat{h}_{n}$ acting in the same Hilbert space (so that $[ \hat{h}_{i},\hat{h}_{j}] =0$ for all $i,j=1,\ldots, n$) and
such that it also commutes with an additional set of $n-1$ dif\/ferential operators $\hat{h}_{n+1},\ldots,\hat{h}_{2n-1}$ of f\/inite order. Besides, in
analogy with the classical case, it is required that all the operators $\hat{h}_{1},\ldots,\hat{h}_{2n-1}$ are algebraically independent~\cite{Blaszak-5}. Throughout the paper it is tacitly assumed that $n>1$ as the case $n=1$ is not interesting from the point of view of our theory.

This paper is devoted to $n$-dimensional maximally superintegrable classical and quantum St\"ackel systems with all constants of motion quadratic in
momenta. Although superintegrable systems of second order, both classical and quantum, have been intensively studied (see for example \cite{Blaszak-bal2,Blaszak-bal3,Blaszak-3,Blaszak-1,Blaszak-2,Blaszak-4} and the review paper~\cite{Blaszak-5}), nevertheless all the results about superintegrable St\"ackel systems (including the important classif\/ication results) were mainly restricted to two or three dimensions or focused on the situation when the Hamiltonian is a~sum of one degree of freedom terms and therefore itself separates in the original coordinate system (see for example \cite{Blaszak-bal1,Blaszak-gonera} or~\cite{Blaszak-kress}). Here we present some general results concerning $n$-dimensional classical separable superintegrable systems in f\/lat spaces, constant curvature spaces and conformally f\/lat spaces. We also present how to separately quantize all considered classical systems. We stress, however, that we do not develop spectral theory of the obtained quantum systems, as it requires a~separate investigation.

The paper is organized as follows. In Section~\ref{section2} we brief\/ly describe -- following previous re\-fe\-ren\-ces, for example~\cite{Blaszak-flat} and~\cite{Blaszak-inpress}~-- f\/lat and constant curvature St\"ackel systems that we consider in this paper. In Section~~\ref{section3} we prove (Theorem~\ref{super}) a suf\/f\/icient condition for this class of St\"ackel system to be maximally superintegrable by f\/inding a~linear in momenta function $P=\sum\limits_{s=1}^{n} y^{s}p_{s}$ on $M$ such that $\{ h_{1},P\} =c$ (it also means that the vector f\/ield $Y=\sum\limits_{s=1}^{n}
y^{s}\frac{\partial}{\partial q_{s}}$ in $P$ is a Killing vector for the metric generated by $h_{1}$) which yields additional $n-1$ functions
$h_{n+i}=\{ h_{i+1},P\} $ commuting with $h_{1}$ and thus turning $h_{1}$ into a maximally superintegrable Hamiltonian. In Section~\ref{section4} we brief\/ly remind the notion of St\"ackel transform (a~functional transform that preserves integrability) and prove (Theorem~\ref{wazny}) conditions that guarantee that a St\"ackel transform transforms maximally superintegrable system into another maximally superintegrable system (i.e., preserves maximal superintegrability). In Section~\ref{section5} we apply this result to our class of maximally superintegrable St\"ackel systems, obtaining Theorem~\ref{wazny2} stating when the St\"ackel transform applied to the considered class of systems yields a St\"ackel system that is f\/lat, of constant curvature or conformally f\/lat. We also demonstrate (Theorem~\ref{tosamo}) that the additional integrals $h_{n+i}$ of systems after St\"ackel transform can be obtained in two equivalent ways. Section~\ref{section6} is devoted to the procedure of minimal quantization of considered St\"ackel systems. As the procedure of minimal quantization depends on the choice of the metric on the conf\/igurational space, we remind f\/irst the result obtained in~\cite{Blaszak-inpress} explaining how to choose the metric in which a minimal quantization is performed so that the integrability of the quantized system is preserved (Theorem~\ref{wazny3}) and then apply Lemma~\ref{lemacik2} to obtain Corollary~\ref{qint} stating under which conditions the procedure of minimal quantization of a classical St\"ackel system, considered in previous sections, yields a~quantum superintegrable and quantum separable system. The paper is furnished with several examples that continue throughout sections. The examples are all $3$-dimensional in order to make the formulas readable but our theory works in arbitrary dimension.

\section{A class of f\/lat and constant curvature St\"ackel systems}\label{section2}

Let us f\/irst introduce the class of Hamiltonian systems that we will consider in this paper. Consider a $2n$-dimensional manifold $M=T^{\ast}Q$ (we remind the reader that $n>1$) equipped with a set of (smooth) coordinates $(\lambda,\mu)=(\lambda_{1},\ldots,\lambda_{n},\mu_{1},\ldots,\mu_{n})$ def\/ined on an open dense set of $M$ and such that $\lambda$ are the coordinates on the base manifold $Q$ while $\mu$ are f\/ibre coordinates. Def\/ine the bivector
\begin{gather}
\Pi=\sum_{i=1}^{n}\frac{\partial}{\partial\lambda_{i}}\wedge\frac{\partial}{\partial\mu_{i}}.\label{bivector}
\end{gather}
Then the bivector $\Pi$ satisf\/ies the Jacobi identity so it becomes a Poisson operator (Poisson tensor), our mani\-fold~$M$ becomes Poisson manifold and the coordinates $(\lambda,\mu)$ become Darboux (canonical) coordinates for the Poisson tensor~(\ref{bivector}). Consider also a set of $n$ algebraic equations on~$M$
\begin{gather}
\sigma(\lambda_{i})+ \sum\limits_{j=1}^{n}
h_{j}\lambda_{i}^{\gamma_{j}}=\frac{1}{2}f(\lambda_{i})\mu_{i}^{2},\qquad i=1,\ldots,n, \qquad \gamma_{i}\in\mathbf{N},\label{sr}
\end{gather}
where we normalize $\gamma_{n}=0$ and where $\sigma$ and $f$ are arbitrary functions of one variable. The relations~(\ref{sr}) constitute a system of~$n$ equations linear in the unknowns~$h_{j}$. Solving these equations with respect to~$h_{j}$ we obtain $n$ functions $h_{j}=h_{j}(\lambda,\mu)$ on~$M$ of the form
\begin{gather}
h_{j}=\frac{1}{2}\mu^{T}A_{j}(\lambda)\mu+U_{j}(\lambda),\qquad j=1,\ldots,n,\label{Benham}
\end{gather}
where we denote $\lambda=(\lambda_{1},\ldots,\lambda_{n})^{T}$ and $\mu =(\mu_{1},\ldots,\mu_{n})^{T}$. The functions $h_{j}$ can be interpreted as~$n$ quadratic in momenta $\mu$ Hamiltonians on the manifold $M=T^{\ast}Q $ while the $n\times n$ symmetric matrices $A_{j}(\lambda)$ can be interpreted as~$n$ twice contravariant symmetric tensors on~$Q$. The Hamiltonians $h_{j}$ commute with respect to $\Pi$
\begin{gather*}
\{ h_{i},h_{j}\} \equiv\Pi(dh_{i},dh_{j})=0\qquad \text{for all}\quad i,j=1,\ldots,n,
\end{gather*}
since the right-hand sides of relations~(\ref{sr}) commute. Thus, the Hamiltonians in~(\ref{Benham}) constitute a Liouville integrable Hamiltonian
system (as they are moreover functionally independent). The Hamiltonians~(\ref{Benham}) constitute a wide class of the so called St\"ackel systems~\cite{Blaszak-Stackel} on $M$ while the relations~(\ref{sr}) are called separation relations~\cite{Blaszak-Sklyanin} of this system. This is the class we will consider throughout our paper. Note that by the very construction of~$h_{i}$ the variables~$(\lambda,\mu) $ are separation variables for all the Hamiltonians in~(\ref{Benham}) in the sense that the Hamilton--Jacobi equations associated with~$h_{j}$ admit a common additively separable solution.

Let us now treat the matrix $A_{1}$ as a contravariant form of a metric tensor on~$Q$: $A_{1}=G$, which turns~$Q$ into a Riemannian space. The covariant form of~$G$ will be denoted by~$g$ (so that $g=G^{-1}$). It turns out that the $(1,1)$-tensors $K_{j}$ def\/ined by
\begin{gather}
K_{j}=A_{j}g, \qquad j=1,\ldots, n\label{A}
\end{gather}
(so that $A_{j}=K_{j}G$ and $K_{1}=I$) are Killing tensors of the metric~$g$.

In this article we will focus on a particular subclass of systems~(\ref{sr}) that is given by the separation relations
\begin{gather}
\sigma(\lambda_{i})+ \sum\limits_{j=1}^{n}h_{j}\lambda_{i}^{n-j}=\frac{1}{2}f(\lambda_{i})\mu_{i}^{2},\qquad i=1,\ldots,n,\label{BenSC}
\end{gather}
(systems of the above class are known in literature as \emph{Benenti systems}) where moreover
\begin{gather}
f(\lambda)=\sum_{j=0}^{m}b_{j}\lambda^{j},\qquad b_{j}\in\mathbf{R},\quad m\in \{ 0,\ldots,n+1 \}, \label{fm}\\
\sigma(\lambda)=\sum_{k\in I}\alpha_{k}\lambda^{k},\qquad \alpha_{k} \in\mathbf{R},\label{laurent}
\end{gather}
where $I\subset\mathbf{Z}$ is some f\/inite index set (i.e., $\sigma$ is a~Laurent polynomial). Note that taking $k\in \{ 0,\ldots,n-1 \} $
will only yield trivial terms in solutions~(\ref{Benham}) of~(\ref{BenSC}), see the end of this section. Also, the parameters $\alpha_{k}$ will play a~crucial roll in the sequel, when we discuss the St\"{a}ckel transform of the above systems. The metric tensor~$G$ attains in this case, due to~(\ref{fm}), the form
\begin{gather}
G=\sum_{j=0}^{m}b_{j}G_{j}=\sum_{j=0}^{m}b_{j}L^{j}G_{0},\label{G}
\end{gather}
where $L=\operatorname{diag} ( \lambda_{1},\ldots,\lambda_{n}) $ is a~$(1,1)$-tensor (the so called special conformal Killing tensor, see for
example~\cite{Blaszak-crampin}) on~$Q$, while
\begin{gather}
G_{j}=\operatorname{diag}\left( \frac{\lambda_{1}^{j}}{\Delta_{1}},\ldots,\frac{\lambda_{n}^{j}}{\Delta_{n}}\right), \qquad j\in\mathbf{Z},\qquad \Delta_{i}= \prod\limits_{j\neq i} (\lambda_{i}-\lambda_{j}).\label{Ga}
\end{gather}

\begin{Remark}\label{remark}The metric (\ref{G}) is f\/lat for $m\leq n$ and of constant curvature for $m=n+1$ (see for example \cite[p.~788]{Blaszak-BKM}). For higher~$m$ it would have a non-constant curvature.
\end{Remark}

Further, the Killing tensors $K_{i}$ in (\ref{A}) are in this case given by
\begin{gather}
K_{i}=\sum_{r=0}^{i-1}q_{r}L^{i-1-r}=-\operatorname{diag}\left(
\frac{\partial q_{i}}{\partial\lambda_{1}},\dots,\frac{\partial q_{i}}{\partial\lambda_{n}}\right),\qquad i=1,\ldots,n.\label{Ki}
\end{gather}
Here and below $q_{i}=q_{i}(\lambda)$ are Vi\`{e}te polynomials in the variables $\lambda_{1},\ldots,\lambda_{n}$:
\begin{gather}
q_{i}(\lambda)=(-1)^{i} \sum\limits_{1\leq s_{1}<s_{2}<\cdots<s_{i}\leq n} \lambda_{s_{1}}\cdots\lambda_{s_{i}}, \qquad i=1,\ldots,n,\label{defq}
\end{gather}
that can also be considered as new coordinates on our Riemannian manifold~$Q$ (we will call them Vi\`{e}te coordinates on~$Q$). Notice that~$q_{i}$ are coef\/f\/icients of the characteristic polynomial of the tensor~$L$. Notice also that the f\/irst form of~$K_{i}$ in~(\ref{Ki}) is of course valid in any coordinate system while the second form of~$K_{i}$ is valid in separation coordinates~$\lambda$ only.

Further, due to (\ref{laurent}), the potentials $U_{j}(\lambda)$ in~(\ref{Benham}) are for the subclass~(\ref{BenSC}) given by
\begin{gather}
U_{j}=\sum_{k\in I}\alpha_{k}V_{j}^{(k)}, \qquad j=1,\ldots,n,\label{Up}
\end{gather}
where the ``basic'' potentials $V_{i}^{k}$ ($k\in\mathbf{Z}$) satisfy the linear system
\begin{gather*}
\lambda_{i}^{k}+\sum\limits_{j=1}^{n} V_{j}^{(k)}\lambda_{i}^{n-j}=0, \qquad i=1,\ldots,n, \qquad k\in\mathbf{Z},
\end{gather*}
and can be computed by the recursive formula \cite{Blaszak-reciprocal, Blaszak-macartur2011}
\begin{gather}
V^{(k)}=F^{k}V^{(0)},\qquad k\in\mathbf{Z},\label{U}
\end{gather}
where $V^{(k)}=\big( V_{1}^{(k)},\ldots,V_{n}^{(k)}\big) ^{T}$, $V^{(0)}=(0,0,\ldots,0,-1)^{T}$ and where $F$ is an $n\times n$ matrix given by
\begin{gather}
F=\left(
\begin{matrix}
-q_{1}(\lambda) & 1 & & \\
-q_{2}(\lambda) & & \ddots & \\
\vdots & & & 1\\
-q_{n}(\lambda) & 0 & \cdots & 0
\end{matrix}
\right) \label{R}
\end{gather}
with $q_{i}(\lambda)$ given by~(\ref{defq}). Note that the formulas (\ref{U}),~(\ref{R}) are non tensorial in that they are the same in an arbitrary coordinate system, not only in the separation variables~$\lambda_{i}$. As we mentioned above, the f\/irst potentials, i.e.,~$V^{(1)}=(0,0,\ldots,0,-1,0)^{T}$ up to $V^{(n-1)}=(-1,0,\ldots,0)^{T}$ are constant, $V^{(n)}=(q_{1},\ldots,q_{n})$ is the f\/irst nonconstant positive potential while $V^{(-1)}$ $=(1/q_{n},q_{1}/q_{n},\ldots,q_{n-1}/q_{n})^{T}$. The potentials~$V^{(k)}$ are for $k<0$ rational functions of~$q$ that quickly become complicated with decreasing~$k$.

To summarize, the Hamiltonians $h_{i}$ generated by (\ref{BenSC})--(\ref{laurent}) can be explicitly written as
\begin{gather*}
h_{r}(\lambda)=-\frac{1}{2}\sum_{i=0}^{n}\frac{\partial q_{r}}{\partial \lambda_{i}}\frac{f(\lambda_{i})\mu_{i}^{2}-\sigma(\lambda_{i})}{\Delta_{i}
}=-\frac{1}{2}\sum_{i=0}^{n}\frac{\partial q_{r}}{\partial\lambda_{i}} \frac{f(\lambda_{i})\mu_{i}^{2}}{\Delta_{i}}+U_{r}(\lambda),\qquad
r=1,\ldots,n.
\end{gather*}

\section[Maximally superintegrable f\/lat and constant curvature St\"ackel systems]{Maximally superintegrable f\/lat and constant curvature\\ St\"ackel systems}\label{section3}

Suppose that we have an integrable system, i.e., $n$ functionally independent Hamiltonians on a~$2n$-dimensional phase space $M$ that pairwise commute: $\{ h_{i},h_{j}\} =0$ for all $i,j=1,\ldots,n$. If there exists an additional function $P$ commuting to a constant with one of the Hamiltonians, say with $h_{1}$ (so that $\{ h_{1},P\} =c$) and if the $n-1$ functions
\begin{gather*}
h_{n+i}= \{ h_{i+1},P \} ,\qquad i=1,\ldots,n-1
\end{gather*}
together with all $h_{i}$ are functionally independent, then the system becomes maximally superintegrable (with respect to this particular Hamiltonian~$h_{1}$) since then by the Jacobi identity
\begin{gather*}
 \{ h_{n+i},h_{1} \} =- \{ \{ P,h_{1} \},h_{i+1} \} - \{ \{ h_{1},h_{i+1} \} ,P \}=0,\qquad i=1,\ldots,n-1.
\end{gather*}
If moreover the f\/irst $n$ integrals of motion~$h_{i}$ are quadratic in momenta and if $P$ is linear in momenta, then the resulting $n-1$ extra integrals of motion~$h_{n+i}$ are also quadratic in momenta. Thus, in order to distinguish those constant curvature St\"ackel systems that are maximally superintegrable and have quadratic in momenta extra integrals of motion we have to f\/ind~$P$ that commutes with~$h_{1}$ up to a constant and that is linear in momenta. To do it in a systematic way, we need the following well-known result.

\begin{Lemma}\label{lemacik}Suppose that $(q,p)=(q_{1},\ldots,q_{n},p_{1},\ldots,p_{n})$ are Darboux $($canonical$)$ coordinates on a~$2n$-dimensional phase space $M=T^{\ast}Q$. Consider two functions on~$M$:
\begin{gather*}
h=\frac{1}{2}\sum_{i,j=1}^{n}p_{i}A^{ij}(q)p_{j}+U(q)\qquad \text{with} \quad A=A^{T} \qquad \text{and} \qquad P=\sum\limits_{i=1}^{n}y^{i}(q)p_{i}.
\end{gather*}
Then%
\begin{gather*}
\{ h,P\} =\frac{1}{2}\sum_{i,j=1}^{n}p_{i}( L_{Y}A)^{ij}p_{j}+Y(U),
\end{gather*}
where $Y$ is the vector field on $Q$ given by $Y=\sum\limits_{i=1}^{n}y^{i}(q)\frac{\partial}{\partial q_{i}}$ and where $L_{Y}$ is the Lie
derivative $($on~$Q)$ along~$Y$.
\end{Lemma}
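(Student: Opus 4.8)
The plan is to compute $\{h,P\}$ directly in the canonical coordinates $(q,p)$, where the Poisson bracket reads $\{F,G\}=\sum_{i}\big(\partial_{q_{i}}F\,\partial_{p_{i}}G-\partial_{p_{i}}F\,\partial_{q_{i}}G\big)$, and then to recognize the resulting quadratic form in $p$ as the Lie derivative $L_{Y}A$. First I would split $h$ into its kinetic part $T=\tfrac{1}{2}\sum_{i,j}p_{i}A^{ij}p_{j}$ and its potential part $U$. Since $U$ depends only on $q$, the bracket $\{U,P\}$ collapses immediately to $\sum_{i}(\partial_{q_{i}}U)\,y^{i}=Y(U)$, which supplies the second term of the claimed formula. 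The substantive work is therefore confined to $\{T,P\}$.

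For $\{T,P\}$ I would record the four partial derivatives: $\partial_{q_{k}}T=\tfrac{1}{2}\sum_{i,j}p_{i}(\partial_{q_{k}}A^{ij})p_{j}$, $\partial_{p_{k}}T=\sum_{j}A^{kj}p_{j}$ (here the symmetry $A=A^{T}$ is used), $\partial_{q_{k}}P=\sum_{i}(\partial_{q_{k}}y^{i})p_{i}$ and $\partial_{p_{k}}P=y^{k}$. Assembling the bracket then gives
\begin{gather*}
\{T,P\}=\tfrac{1}{2}\sum_{i,j,k}y^{k}(\partial_{q_{k}}A^{ij})\,p_{i}p_{j}-\sum_{i,j,k}A^{kj}(\partial_{q_{k}}y^{i})\,p_{i}p_{j}.
\end{gather*}
On the other hand, the coordinate expression for the Lie derivative of a twice-contravariant tensor is $(L_{Y}A)^{ij}=\sum_{k}y^{k}\partial_{q_{k}}A^{ij}-\sum_{k}A^{kj}\partial_{q_{k}}y^{i}-\sum_{k}A^{ik}\partial_{q_{k}}y^{j}$, so the target quadratic form $\tfrac{1}{2}\sum_{i,j}p_{i}(L_{Y}A)^{ij}p_{j}$ carries two cross terms rather than the single one produced above.

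The one step that needs care -- and essentially the only place the hypotheses enter -- is reconciling this apparent mismatch. The key observation is that the coefficient array of $p_{i}p_{j}$ may be symmetrized freely because $p_{i}p_{j}=p_{j}p_{i}$. Symmetrizing the single cross term $-\sum_{i,j,k}A^{kj}(\partial_{q_{k}}y^{i})p_{i}p_{j}$ and relabelling $i\leftrightarrow j$ in one half produces $-\tfrac{1}{2}\sum_{i,j,k}\big(A^{kj}\partial_{q_{k}}y^{i}+A^{ki}\partial_{q_{k}}y^{j}\big)p_{i}p_{j}$; invoking $A^{ik}=A^{ki}$ then identifies this with exactly the two cross terms of $L_{Y}A$. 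The quadratic term $\tfrac{1}{2}\sum_{i,j,k}y^{k}(\partial_{q_{k}}A^{ij})p_{i}p_{j}$ matches term by term with no adjustment, and adding back $Y(U)$ completes the identification. I expect the bookkeeping of indices in this symmetrization to be the main (and essentially the only) obstacle; everything else is routine differentiation.
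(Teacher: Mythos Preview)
Your argument is correct. The paper does not actually supply a proof of this lemma: it is introduced as a ``well-known result'' and stated without demonstration, so there is nothing to compare your approach against. Your direct computation in canonical coordinates, together with the symmetrization step reconciling the single cross term with the two cross terms in the coordinate formula for $L_{Y}A$, is exactly the standard verification and is complete as written.
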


One can thus say that $h$ and $P$ commute if the corresponding vector f\/ield~$Y$ is the Killing vector for the metric def\/ined by the $(2,0)$-tensor~$A$ (i.e., if $L_{Y}A=0$) and if moreover~$Y$ is symmetry of~$U$ (i.e., if $Y(U)=0$).

Consider now the St\"ackel system given by~(\ref{BenSC}). The coordinates $(\lambda,\mu)$ are Darboux (so that the above lemma applies to this
situation) but the components of the metric~(\ref{G}) expressed in $\lambda$-coordinates are rational functions making computations very complicated. We will therefore perform the search for the function~$P$ in the coordinates $(q,p)$ on $M$ such that $q_{i}$ are Vi\`{e}te coordinates~(\ref{defq}) and such that
\begin{gather}
p_{i}=-\sum_{k=1}^{n}\frac{( \lambda_{k}) ^{n-i}\mu_{k}}{\Delta_{k}}\label{defp}
\end{gather}
are the conjugated momenta. Since the transformation from $(\lambda,\mu)$ to $(q,p)$ is a point transformation the coordinates $(q,p)$ are also Darboux coordinates four our Poisson tensor. It can be shown~\cite{Blaszak-BS} that in the $(q,p)$-coordinates
\begin{gather}
(L)_{j}^{i}=-\delta_{j}^{1}q_{i}+\delta_{j}^{i+1},\qquad (G_{0}) ^{ij}=\sum_{k=0}^{n-1}q_{k}\delta_{n+k+1}^{i+j},\label{LG}
\end{gather}
and moreover
\begin{gather}
( G_{r}) ^{ij} =
\begin{cases}
\sum\limits_{k=0}^{n-r-1}q_{k}\delta_{n-r+k+1}^{i+j},& i,j=1,\dots,n-r,\\
-\sum\limits_{k=n-r+1}^{n}q_{k}\delta_{n-r+k+1}^{i+j},& i,j=n-r+1,\dots,n,\\
0 & \text{otherwise},
\end{cases} \qquad r=1,\dots,n,\label{Gr}\\
( G_{n+1}) ^{ij} =q_{i}q_{j}-q_{i+j},\qquad i,j=1,\dots,n,\nonumber
\end{gather}
where we set $q_{0}\equiv1$ and $q_{r}=0$ for $r>n$. An advantage of these new coordinates is that the geodesic parts of~$h_{i}$ are polynomial in~$q$.

\begin{Example}
For $n=3$ and in Vi\`{e}te coordinates (\ref{defq}) we have
\begin{gather}
L=\left(
\begin{matrix}
-q_{1} & 1 & 0\\
-q_{2} & 0 & 1\\
-q_{3} & 0 & 0
\end{matrix}
\right) ,\qquad G_{0}=\left(
\begin{matrix}
0 & 0 & 1\\
0 & 1 & q_{1}\\
1 & q_{1} & q_{2}
\end{matrix}
\right), \label{G0}
\end{gather}
and hence the metric tensors $G_{j}$ have the form
\begin{gather}
G_{1}=\left(
\begin{matrix}
0 & 1 & 0\\
1 & q_{1} & 0\\
0 & 0 & -q_{3}
\end{matrix}
\right) ,\qquad G_{2}=\left(
\begin{matrix}
1 & 0 & 0\\
0 & -q_{2} & -q_{3}\\
0 & -q_{3} & 0
\end{matrix}
\right) ,\label{G12}
\\
G_{3}=\left(
\begin{matrix}
-q_{1} & -q_{2} & -q_{3}\\
-q_{2} & -q_{3} & 0\\
-q_{3} & 0 & 0
\end{matrix}
\right) ,\qquad G_{4}=\left(
\begin{matrix}
q_{1}^{2}-q_{2} & q_{1}q_{2}-q_{3} & q_{1}q_{3}\\
q_{1}q_{2}-q_{3} & q_{2}^{2} & q_{2}q_{3}\\
q_{1}q_{3} & q_{2}q_{3} & q_{3}^{2}
\end{matrix}
\right). \label{G34}
\end{gather}
In accordance with Remark \ref{remark}, the metric tensors $G_{0},\dots,G_{3}$ are f\/lat, while the metric $G_{4}$ is of constant curvature.
\end{Example}

We are now in position to perform our search for $P$. We do this in the case when $\sigma$ is the Laurent polynomial (\ref{laurent}) and allow $f$ to be polynomial as in~(\ref{fm}); a particular case of $f=\lambda^{m}$ of the theorem below was formulated in~\cite{Blaszak-supBenMA}.

\begin{Theorem}\label{super} The St\"ackel system
\begin{gather*}
\sum_{k\in I}\alpha_{k}\lambda_{i}^{k}+ \sum\limits_{j=1}^{n}
h_{j}\lambda_{i}^{n-j}=\frac{1}{2}f(\lambda_{i})\mu_{i}^{2},\qquad i=1,\ldots,n,
\end{gather*}
$($where $I\subset\mathbf{Z}$ is a finite index set$)$ with $f(\lambda_{i})$ given by
\begin{gather}
f(\lambda)=\sum_{j=0}^{m}b_{j}\lambda^{j}, \qquad b_{j}\in\mathbf{R},\qquad m\in \{ 0,\ldots,n+1 \} \label{bj}
\end{gather}
is maximally superintegrable in the following cases:
\begin{enumerate}\itemsep=0pt
\item[$(i)$] case $m\in \{ 0,\ldots,n-1 \}$: if $I\subset \{n,\dots,2n-m-1 \} \cup \{ -1,\dots,-r-1 \} $, where~$r$ is such
that $b_{i}=0$ for $i=0,\dots,r\leq m-1$ $($if all \thinspace$b_{i}\neq0$, then there is no such $r$ and no second component in~$I)$;

\item[$(ii)$] case $m=n$ and $b_{0}=b_{1}=0$: if $I\subset \{n,-1,\dots,-r+1 \} $, where $r$ is such that $b_{i}=0$ for $i=2,\dots,r\leq
n-1$;

\item[$(iii)$] case $m=n+1$ $($case of constant curvature$)$ and $b_{0}=b_{1}=0$: if $I\subset \{ -1,\dots,-r+1 \} $, where~$r$ is such that $b_{i}=0$ for $i=2,\dots,r\leq n$.
\end{enumerate}

The additional integrals $h_{n+r}$ commuting with $h_{1}$ are given in $(q,p)$-coordinates by
\begin{gather}
h_{n+r}=\frac{1}{2}\sum_{i,j=1}^{n}p_{i} ( L_{Y}A_{r+1} )^{ij}p_{j}+Y(U_{r+1}) , \qquad r=1,\ldots,n-1,\label{struk}
\end{gather}
where $Y$ is a vector field on $Q$ given by
\begin{enumerate}\itemsep=0pt
\item[$(i)$] for $m\in \{ 0,\ldots,n-1 \}$
\begin{gather}
Y=\sum\limits_{i=0}^{m}b_{m-i}\frac{\partial}{\partial q_{n-m+i}},\label{jeden}
\end{gather}

\item[$(ii)$] for $m=n$
\begin{gather}
Y=q_{n}\sum\limits_{i=2}^{n}b_{n-i+2}\frac{\partial}{\partial q_{i}},\label{dwa}
\end{gather}

\item[$(iii)$] for $m=n+1$%
\begin{gather}
Y=q_{n}\sum\limits_{i=1}^{n}b_{n-i+2}\frac{\partial}{\partial q_{i}},\label{trzy}
\end{gather}
\end{enumerate}
and where $L_{Y}$ denotes the Lie derivative along $Y$.
\end{Theorem}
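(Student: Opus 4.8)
The plan is to apply Lemma~\ref{lemacik} and thereby reduce the entire statement to the production, in each of the three cases, of a single vector field $Y$ on $Q$ that is at once a Killing vector of the metric $G=A_1$ (so that $L_Y A_1=0$) and an infinitesimal symmetry of the leading potential up to an additive constant (so that $Y(U_1)=c$). Once such a $Y$ is exhibited, Lemma~\ref{lemacik} immediately yields $\{h_1,P\}=c$ for $P=\sum_s y^s p_s$, and the Jacobi-identity computation already displayed just before Lemma~\ref{lemacik} shows that the functions $h_{n+r}=\{h_{r+1},P\}$, whose explicit form~(\ref{struk}) is again read off from Lemma~\ref{lemacik}, Poisson-commute with $h_1$. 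Thus the proof splits into three verifications: the Killing condition $L_Y G=0$, the potential condition $Y(U_1)=c$, and the functional independence of the enlarged family $h_1,\dots,h_{2n-1}$.

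For the Killing condition I would compute entirely in Vi\`ete coordinates, where $G=\sum_{j=0}^m b_j G_j$ and the blocks $G_r$ are the explicit polynomial matrices~(\ref{Gr}). In case~(i) the candidate $Y$ of~(\ref{jeden}) has constant coefficients, so the Lie-derivative formula collapses to $(L_Y G)^{ij}=Y(G^{ij})$; I would then check directly from~(\ref{Gr}) that differentiating the Kronecker-delta entries of $(G_r)^{ij}$ along $\partial_{q_{n-m+i}}$ shifts the blocks $G_j$ among themselves in precisely the combination killed by the weights $b_{m-i}$, the sums over $j$ and over the shifted index telescoping once the coefficients are the $b$'s of $f$. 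In cases~(ii) and~(iii) the factor $q_n$ in~(\ref{dwa}),~(\ref{trzy}) makes $Y$ non-constant, so the two connection-type terms $-G^{kj}\partial_k y^i-G^{ik}\partial_k y^j$ survive, and the extra work is to show that they cancel the inhomogeneous contribution of the top block $G_{n+1}$ (respectively $G_n$); this is exactly the role of the hypotheses $b_0=b_1=0$, which strip from $G$ the pieces that $Y$ cannot annihilate.

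For the potential condition I would expand $U_1=\sum_{k\in I}\alpha_k V_1^{(k)}$ and evaluate $Y(V_1^{(k)})$ using the recursion~(\ref{U})--(\ref{R}), most efficiently by applying $Y$ to the linear system defining $V^{(k)}$ and reading off the resulting relations. The claim is that $Y(V_1^{(k)})$ is constant for exactly the exponents permitted by the index-set restrictions: the positive block $\{n,\dots,2n-m-1\}$ is the set of $k$ for which the polynomial potential maps to a constant under $Y$, while the negative block $\{-1,\dots,-r-1\}$ (respectively $\{-1,\dots,-r+1\}$) records the rational potentials for which the same holds, the cut-off $r$ being dictated by the vanishing pattern of the low $b_i$. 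In this way the admissibility of $I$ — including the statement that no negative block is allowed when all $b_i\neq 0$ — emerges as a direct consequence of forcing $Y(U_1)$ to be constant.

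The step I expect to be the main obstacle is the functional independence of $h_1,\dots,h_{2n-1}$. The first $n$ are independent by construction, and each $h_{n+r}$ carries the momentum-quadratic symbol $p^{T}(L_Y A_{r+1})p$. I would argue independence by isolating these top-order parts and showing that the symbols $p^{T}(L_Y A_{r+1})p$ are linearly independent, over the field of functions on $Q$, from the symbols $p^{T}A_j p$; concretely, by forming the Jacobian of $(h_1,\dots,h_{2n-1})$ and exhibiting a nonvanishing maximal minor at a generic point of $M$. This is where the specific Killing-tensor structure of the $A_j$ together with the explicit $Y$ must genuinely be used, and where no uniform formal argument is automatic; should a case-free treatment prove awkward, I would fall back on a dimension count combined with a genericity argument to establish independence on an open dense subset of $M$.
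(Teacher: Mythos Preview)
Your approach is essentially the same as the paper's: exhibit in each case a vector field $Y$ with $L_Y G=0$ and $Y(U_1)=c$, invoke Lemma~\ref{lemacik} to get $\{h_1,P\}=c$ for the corresponding $P$, and then define $h_{n+r}=\{h_{r+1},P\}$ via the Jacobi identity. The paper's own proof is in fact considerably terser than your plan: it simply asserts that the Killing equation $L_Y G=0$ has (up to scale) the stated solution in each case, asserts the values of $k$ for which $Y(V_1^{(k)})$ is constant, and then dismisses functional independence in a single clause (``Due to their form, the functions $h_1,\ldots,h_{2n-1}$ are functionally independent''). So your proposed verifications via the explicit block structure~(\ref{Gr}) and the recursion~(\ref{U})--(\ref{R}) go beyond what the paper actually writes out, and your identification of functional independence as the genuine analytical work is well-taken --- the paper does not supply that argument either.
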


\begin{proof}We will search for a function $P$ that commutes with $h_{1}$ and we will perform this search in the $(q,p)$-coordinates (\ref{defq}), (\ref{defp}). The Hamiltonian $h_{1}$ has in these coordinates the form
\begin{gather*}
h_{1}=\frac{1}{2}\sum_{i,j=1}^{n}G^{ij}(q)p_{i}p_{j}+V_{1}^{(k)}(q)
\end{gather*}
with $G$ given by (\ref{G}) and further by (\ref{LG}), (\ref{Gr}) and with the potential $V_{1}^{(k)}(q)$ def\/ined by~(\ref{U}) and~(\ref{R}).

$(i)$ For $m=0,\dots,n-1$, the Killing equation $L_{Y}G=0$ has a unique (up to a multiplicative constant) constant solution~(\ref{jeden}) which also satisf\/ies $Y\big( V_{1}^{(k)}\big) =0$ for $k=n,\dots,2n-m-2$ and $Y\big( V_{1}^{(2n-m-1)}\big) =c$. In consequence, due to Lemma~\ref{lemacik}, the function
\begin{gather*}
P=b_{m}p_{n-m}+b_{m-1}p_{n-m+1}+\dots +b_{0}p_{n}
\end{gather*}
satisf\/ies
\begin{gather}
\{h_{1},P\}=0 \label{jola}
\end{gather}
for $k=n,\dots,2n-m-2$ and
\begin{gather*}
\{h_{1},P\}=c
\end{gather*}
for $k=2n-m-1$. Moreover, if $b_{i}=0$ for $i=0,\dots,r\leq m-1,$ then~(\ref{jola}) is satisf\/ied also for $k=-1,\dots,-r-1$.

$(ii)$ For $m=n$, there is no constant solution of $L_{Y}G=0$. This equation has a~simple linear in $q$ solution~(\ref{dwa}) provided that $b_{0}=b_{1}=0$; $Y$ is then also a symmetry for the single nontrivial potential $V_{1}^{(n)}$, i.e., $Y\big( V_{1}^{(n)}\big) =0$. In consequence, the function
\begin{gather*}
P=q_{n} ( b_{n}p_{2}+b_{n-1}p_{3}+\dots +b_{2}p_{n} )
\end{gather*}
satisf\/ies (\ref{jola}). Moreover, if $b_{i}=0$ for $i=2,\dots,r\leq n-1$ then~(\ref{jola}) is satisf\/ied also for $k=-1,\dots,-r+1$.

$(iii)$ For $m=n+1$ there is no constant solution of $L_{Y}G=0$. This equation has a~simple linear in $q$ solution~(\ref{trzy}) provided that $b_{0}=b_{1}=0$ but $Y$ is not a symmetry for any nontrivial potential $V_{1}^{(k)}$. In consequence, the function
\begin{gather*}
P=q_{n} ( b_{n+1}p_{1}+b_{n}p_{2}+\dots +b_{2}p_{n} ).
\end{gather*}
Poisson commutes only with the geodesic part $E_{1}$ of $h_{1}$: $\{E_{1},P\}=0$. However, if $b_{i}=0$ for $i=2,\dots,r\leq n$ then~(\ref{jola})
is satisf\/ied for $k=-1,\dots,-r+1$.

\looseness=1 Finally, the form of additional integrals $h_{n+r}$ in (\ref{struk}) is obtained through $h_{n+r}= \{ h_{r+1},P \} $ by~using Lemma~\ref{lemacik}. Due to their form, the functions $h_{1},\ldots,h_{2n-1}$ are functionally inde\-pen\-dent.
\end{proof}

\begin{Remark}\label{parametry}The above theorem provides us with a suf\/f\/icient condition for maximal superintegrability of St\"ackel systems of constant curvature (f\/lat in particular) in case when~$f(\lambda)$ is a~polynomial of maximal order~$n+1$. In consequence, the case~$(i)$ of Theorem~\ref{super} yields an $( n+1) $-parameter family of maximally superintegrable systems, parametrized by
\begin{gather*}
 \{ b_{r},\dots,b_{m},\alpha_{-r-1},\dots,\alpha_{-1},\alpha_{n},\dots,\alpha_{2n-m-1} \} ,\qquad r=0,\dots,m,
\end{gather*}
where $b_{j}$ parametrize superintegrable metrics (\ref{fm}), (\ref{G}) and~$\alpha_{j}$ parametrize families of nontrivial superintegrable potentials~$U$~(\ref{Up}) (in case there is no~$r$, i.e., all $b_{i}\neq0$) then there is no~$\alpha_{-j}$ in the above set. Similarly, in the cases $(ii)$ and $(iii)$ Theorem~\ref{super} yields appropriate $n$-parameter families of superintegrable systems. A particular case of that classif\/ication (for the monomial case $f(\lambda)=\lambda^{m}$) was presented in~\cite{Blaszak-Artur}.
\end{Remark}

It is possible to calculate explicitly the structure of the geodesic parts $E_{n+r}$ of the extra integrals~$h_{n+r}$ in the separation coordinates~$(\lambda,\mu)$.

\begin{Proposition}The geodesic parts
\begin{gather*}
E_{n+r}=\frac{1}{2}\sum_{i,j=1}^{n}\mu_{i}A_{n+r}^{ij}(\lambda)\mu_{j},\qquad r=1,\ldots,n-1
\end{gather*}
of additional integrals of motion $h_{n+r}= \{ h_{r+1},P \} $ with $r=1,\ldots,n-1$ are given by
\begin{enumerate}\itemsep=0pt
\item[$(i)$] for $0\leq m\leq n-1$
\begin{gather*}
A_{n+r}^{ij} =-\frac{\partial^{2}q_{r}}{\partial\lambda_{i}\partial \lambda_{j}}\frac{f(\lambda_{i})f(\lambda_{j})}{\Delta_{i}\Delta_{j}},\qquad i\neq j,\\
A_{n+r}^{ii} =\frac{f(\lambda_{i})}{\Delta_{i}}\sum\limits_{j=1}^{n}\frac{\partial^{2}q_{r}}{\partial\lambda_{i}\partial\lambda_{j}}
\frac{f(\lambda_{j})}{\Delta_{j}},
\end{gather*}
where $q_{r}=q_{r}(\lambda)$ are given by \eqref{defq}, $f(\lambda)$ are given by \eqref{fm} while $\Delta_{i}$ by~\eqref{Ga},

\item[$(ii)$] for $m=n,n+1$
\begin{gather*}
A_{n+r}^{ij} =-\frac{\partial^{2}q_{r}}{\partial\lambda_{i}\partial \lambda_{j}}\frac{\partial q_{n}}{\partial\lambda_{i}}\frac{1}{\lambda_{j}
}\frac{f(\lambda_{i})f(\lambda_{j})}{\Delta_{i}\Delta_{j}}, \qquad i\neq j,\\
A_{n+r}^{ii} =\frac{f(\lambda_{i})}{\Delta_{i}}\sum\limits_{j=1}^{n}\frac{\partial^{2}q_{r}}{\partial\lambda_{i}\partial\lambda_{j}}
\frac{f(\lambda_{j})}{\Delta_{j}}\frac{\partial q_{n}}{\partial\lambda_{j}}\frac{1}{\lambda_{j}}.
\end{gather*}
\end{enumerate}
\end{Proposition}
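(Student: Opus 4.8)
The plan is to compute the additional integrals not through the Poisson bracket directly but through Lemma~\ref{lemacik}, which tells us that the geodesic part of $h_{n+r}=\{h_{r+1},P\}$ is carried by a single $(2,0)$-tensor, the Lie derivative $L_Y A_{r+1}$ of the Killing tensor $A_{r+1}$ along the Killing field $Y$. Being a tensor, $L_Y A_{r+1}$ has coordinate-independent meaning, so although $Y$ and the integrals were produced in the Vi\`ete coordinates $(q,p)$, we are free to evaluate $A_{n+r}=L_Y A_{r+1}$ in the separation coordinates $(\lambda,\mu)$, where $A_{r+1}$ is diagonal with $A_{r+1}^{ii}=-\frac{\partial q_{r+1}}{\partial\lambda_i}\frac{f(\lambda_i)}{\Delta_i}$ (read off from the closed form of $h_{r+1}$ at the end of Section~\ref{section2}). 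The only preparatory step is to re-express $Y$ in the $\lambda$-variables.

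First I would transform $Y$. Writing (\ref{jeden}) as $Y=\sum_{a=n-m}^{n}b_{n-a}\frac{\partial}{\partial q_a}$ and using the Jacobian of the Vi\`ete map, $\frac{\partial\lambda_k}{\partial q_a}=-\lambda_k^{\,n-a}/\Delta_k$, the inner sum telescopes into $f$, giving $Y^k=-f(\lambda_k)/\Delta_k$ in case~(i). The same computation in cases~(ii) and~(iii), using $b_0=b_1=0$ to recognise $\sum_i b_{n-i+2}\lambda_k^{\,n-i}=f(\lambda_k)/\lambda_k^2$ and $q_n/\lambda_k=\partial q_n/\partial\lambda_k$, gives $Y^k=-\frac{\partial q_n}{\partial\lambda_k}\frac{1}{\lambda_k}\frac{f(\lambda_k)}{\Delta_k}$. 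This is already the origin of the extra factor $\frac{\partial q_n}{\partial\lambda_i}\frac{1}{\lambda_j}$ separating case~(ii) of the Proposition from case~(i).

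Next I would substitute the diagonal $A_{r+1}$ and this $Y$ into the component formula $(L_YA)^{ij}=\sum_k Y^k\partial_k A^{ij}-\sum_k A^{kj}\partial_kY^i-\sum_k A^{ik}\partial_kY^j$. For $i\neq j$ the first term drops and the derivatives of $Y$ are elementary once one uses $\frac{\partial\Delta_i}{\partial\lambda_j}=-\Delta_i/(\lambda_i-\lambda_j)$; collecting terms leaves a difference of first derivatives,
\[
(L_YA_{r+1})^{ij}=\frac{f(\lambda_i)f(\lambda_j)}{\Delta_i\Delta_j\,(\lambda_i-\lambda_j)}\left(\frac{\partial q_{r+1}}{\partial\lambda_i}-\frac{\partial q_{r+1}}{\partial\lambda_j}\right),\qquad i\neq j,
\]
with the universal extra factor $q_n/(\lambda_i\lambda_j)$ inserted in cases~(ii),~(iii). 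The off-diagonal part then rests on one algebraic identity, valid because each $q_s$ is multilinear in the $\lambda$'s,
\[
\frac{1}{\lambda_i-\lambda_j}\left(\frac{\partial q_s}{\partial\lambda_i}-\frac{\partial q_s}{\partial\lambda_j}\right)=-\frac{\partial^2 q_s}{\partial\lambda_i\partial\lambda_j},
\]
which converts the above into exactly the mixed second derivative displayed in the Proposition.

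The main obstacle is the diagonal entries $A_{n+r}^{ii}$, where all three terms of the Lie-derivative formula survive and where differentiating $\Delta_i$ with respect to its own variable, $\frac{\partial\Delta_i}{\partial\lambda_i}=\Delta_i\sum_{l\neq i}(\lambda_i-\lambda_l)^{-1}$, together with $f'(\lambda_i)$, generates many spurious terms. Rather than force these to collapse, I would note that, since $\partial^2 q_s/\partial\lambda_i^2=0$, the claimed diagonal formula is \emph{exactly equivalent} to the statement that every row of $A_{n+r}$ sums to zero, $\sum_j A_{n+r}^{ij}=0$; combined with the off-diagonal entries already found, proving this vanishing is all that remains. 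This is the one genuinely delicate point: after substituting $Y^k$ and $A_{r+1}^{kk}$, the $f'(\lambda_i)$-terms cancel only upon using the degree restriction on $f$ (namely $m\le n-1$ in case~(i); in cases~(ii),~(iii) the hypotheses $b_0=b_1=0$ make $f(\lambda)/\lambda^2$ a polynomial of degree $\le n-1$, restoring the same mechanism) together with the standard partial-fraction identities $\sum_j\lambda_j^{\,p}/\Delta_j$, which vanish for $0\le p\le n-2$ and equal $1$ for $p=n-1$. The cases~(ii) and~(iii) run in complete parallel, the only difference being the common factor $q_n/(\lambda_i\lambda_j)$ carried through from the transformed field~$Y$.
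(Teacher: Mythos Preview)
The paper states this Proposition without proof, so there is no argument to compare against. Your route via Lemma~\ref{lemacik}, pulling $Y$ back to the $\lambda$-chart and evaluating $L_{Y}A_{r+1}$ there, is the natural one, and the off-diagonal computation you outline is correct: the Jacobian $\partial\lambda_{k}/\partial q_{a}=-\lambda_{k}^{\,n-a}/\Delta_{k}$ indeed collapses~(\ref{jeden}) to $Y^{k}=-f(\lambda_{k})/\Delta_{k}$ and~(\ref{dwa}),~(\ref{trzy}) to $Y^{k}=-\dfrac{q_{n}}{\lambda_{k}^{2}}\dfrac{f(\lambda_{k})}{\Delta_{k}}=-\dfrac{\partial q_{n}}{\partial\lambda_{k}}\dfrac{1}{\lambda_{k}}\dfrac{f(\lambda_{k})}{\Delta_{k}}$, and the multilinearity identity $(\partial_{i}q_{s}-\partial_{j}q_{s})/(\lambda_{i}-\lambda_{j})=-\partial^{2}_{ij}q_{s}$ then produces the stated mixed second derivative. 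One remark: carrying this through gives $\partial^{2}q_{r+1}/\partial\lambda_{i}\partial\lambda_{j}$, not $\partial^{2}q_{r}/\partial\lambda_{i}\partial\lambda_{j}$; the index in the paper's display is off by one (for instance, in Example~\ref{2} with $n=3$, $m=1$, $r=1$ one has $E_{4}=-\tfrac{1}{2}p_{2}^{2}$, whose $\lambda$-expression involves $\partial^{2}_{ij}q_{2}=1$, whereas $\partial^{2}_{ij}q_{1}=0$ would force $A_{4}^{ij}=0$ off the diagonal).

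Your treatment of the diagonal terms is sound in principle: reducing $A_{n+r}^{ii}=-\sum_{j\neq i}A_{n+r}^{ij}$ to the vanishing of $\sum_{j}(L_{Y}A_{r+1})^{ij}$ is exactly right, and you correctly isolate the ingredients (the degree bound on $f$ so that $f$, respectively $f/\lambda^{2}$, is a polynomial of degree at most $n-1$, together with $\sum_{j}\lambda_{j}^{\,p}/\Delta_{j}=0$ for $0\le p\le n-2$). This is, as you say, the only genuinely laborious step; once $Y^{k}$ and $A_{r+1}^{kk}$ are substituted and $\partial_{i}\Delta_{i}=\Delta_{i}\sum_{l\neq i}(\lambda_{i}-\lambda_{l})^{-1}$ is used, the $f'$-contributions cancel against the term coming from $A^{ii}\partial_{i}\sum_{j}Y^{j}$ precisely under those degree restrictions. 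Your sketch stops short of exhibiting this cancellation, but the mechanism you point to is the correct one.
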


Let us illustrate the above considerations by some examples.

\begin{Example}\label{1}Consider the f\/lat case $n=3$, $m=1$, $b_{1}=1$, (so that $f(\lambda)=b_{0}+\lambda$) with $\sigma(\lambda)= \alpha\lambda^{k}$ and where $k=-1,3$ or~$4$. The commuting Hamiltonians~$h_{i}$ are given by separation relations~(\ref{BenSC})
\begin{gather*}
\alpha\lambda_{i}^{k}+h_{1}\lambda_{i}^{2}+h_{2}\lambda_{i}+h_{3}=\frac{1}{2}(b_{0}+\lambda_{i})\mu_{i}^{2}, \qquad i=1,2,3.
\end{gather*}
Then, according to (\ref{Ki}), (\ref{G0}), (\ref{G34}) and to (\ref{U}), (\ref{R}) the corresponding St\"ackel Hamiltonians attain in the $(q,p)$
coordinates (\ref{defq}), (\ref{defp}) the form
\begin{gather*}
h_{1} =p_{1}p_{2}+b_{0}p_{1}p_{3}+b_{0}q_{1}p_{2}p_{3}+\frac{1}{2}(q_{1}+b_{0})p_{2}^{2}+\frac{1}{2}(b_{0}q_{2}-q_{3})p_{3}^{2}+\alpha V_{1}^{(k)}(q),\\
h_{2} =\frac{1}{2}p_{1}^{2}+\frac{1}{2}\big(q_{1}^{2}+2b_{0}q_{1}-q_{2}\big)p_{2}^{2}+\frac{1}{2}(b_{0}q_{1}q_{2}-q_{1}q_{3}-b_{0}q_{3})p_{3}^{2} +(q_{1}+b_{0})p_{1}p_{2}\\
\hphantom{h_{2} =}{} +b_{0}q_{1}p_{1}p_{3}+\big(b_{0}q_{1}^{2}-q_{3}\big)p_{2}p_{3}+\alpha V_{2}^{(k)}(q),\\
h_{3} =\frac{1}{2}b_{0}p_{1}^{2}+\frac{1}{2}(b_{0}q_{1}^{2}-q_{3})p_{2} ^{2}+\frac{1}{2}\big({-}b_{0}q_{1}q_{3}+b_{0}q_{2}^{2}-q_{2}q_{3}\big)p_{3}^{2} +b_{0}q_{1}p_{1}p_{2}\\
\hphantom{h_{3} =}{} +(b_{0}q_{2}-q_{3})p_{1}p_{3}+(b_{0}q_{1}q_{2}-q_{1}q_{3}-b_{0}q_{3} )p_{2}p_{3}+\alpha V_{3}^{(k)}(q),
\end{gather*}
where
\begin{gather*}
V_{1}^{(-1)}=\frac{1}{q_{3}},\qquad V_{2}^{(-1)}=\frac{q_{1}}{q_{3}},\qquad V_{3}^{(-1)}=\frac{q_{2}}{q_{3}},
\\
V_{1}^{(3)}=q_{1},\qquad V_{2}^{(3)}=q_{2},\qquad V_{3}^{(3)}=q_{3},
\\
V_{1}^{(4)}=-q_{1}^{2}+q_{2},\qquad V_{2}^{(4)}=-q_{1}q_{2}+q_{3},\qquad V_{3}^{(4)}=-q_{1}q_{3}.
\end{gather*}
According to Theorem \ref{super} $Y=\frac{\partial}{\partial q_{2}}+b_{0}\frac{\partial}{\partial q_{3}}$ so that $P=p_{2}+b_{0}p_{3}$ and thus
\begin{gather*}
\{h_{1},P\}=
\begin{cases} 0 & \text{for} \ k=-1 \ \text{and} \ b_{0}=0,\\
0 & \text{for} \ k=3,\\
\alpha & \text{for} \ k=4 \ \big(\text{then} \ Y\big(V_{1}^{(4)}\big)=1\big).
\end{cases}
\end{gather*}
Hence, \looseness=-1 the system is maximally superintegrable with additional constants of motion for $h_{1}$ given~by:

\noindent for $k=-1$ and $b_{0}=0$
\begin{gather*}
h_{4} =\{ h_{2},P\} =-\frac{1}{2}p_{2}^{2}, \qquad h_{5} =\{h_{3},P\} =-\frac{1}{2}q_{3}p_{3}^{2}+\alpha \frac{q_{3}}{q_{3}^{2}},
\end{gather*}
for $k=3$
\begin{gather*}
\begin{split}
& h_{4} =\{ h_{2},P\} =-\frac{1}{2}p_{2}^{2}-\frac{1}{2}b_{0}^{2}p_{3}^{2}-b_{0}p_{2}p_{3}+\alpha,\\
& h_{5} = \{h_{3},P\} =-\frac{1}{2}b_{0}p_{2}^{2}+\left(\frac{1}{2}b_{0}q_{2}-\frac{1}{2}q_{3}-\frac{1}{2}b_{0}^{2}q_{1}\right)p_{3}^{2} -b_{0}^{2}p_{2}p_{3}+\alpha b_{0},
\end{split}
\end{gather*}
and for $k=4$
\begin{gather*}
h_{4} =\{ h_{2},P\} =-\frac{1}{2}p_{2}^{2}-\frac{1}{2}b_{0}^{2}p_{3}^{2}-b_{0}p_{2}p_{3}+\alpha(b_{0}-q_{1}),\\
h_{5} =\{ h_{3},P\} =-\frac{1}{2}b_{0}p_{2}^{2}+\frac{1}{2}(b_{0}q_{2}-q_{3}-b_{0}^{2}q_{1})p_{3}^{2}-b_{0}^{2}p_{2}p_{3} -\alpha b_{0}q_{1}.
\end{gather*}
\end{Example}

\begin{Example} \label{2}Consider the case $n=3,$ $m=1$, with the monomial $f(\lambda)=\lambda$, given by the separation relations
\begin{gather}
\alpha_{4}\lambda_{i}^{4}+\alpha_{3}\lambda_{i}^{3}+h_{1}\lambda_{i}^{2}%
+h_{2}\lambda_{i}+h_{3}+\alpha_{-1}\lambda_{i}^{-1}=\frac{1}{2}\lambda_{i}%
\mu_{i}^{2} , \qquad i=1,2,3,\label{sep}
\end{gather}
so that $I= \{ -1,3,4 \} $ and satisf\/ies the condition in part~$(i)$ of Theorem~\ref{super}. The system is thus maximally superintegrable and
has a three-parameter family of potentials (cf.\ Remark~\ref{parametry}). Consider now the point transformation from $(q,p)$-coordinates (\ref{defq}), (\ref{defp}) to non-orthogonal coordinates $(r,s)$ such that $r_{i}$ are given by~\cite{Blaszak-BS}
\begin{gather}\label{r}
q_{1} =r_{1},\qquad q_{2} =r_{2}+\frac{1}{4}r_{1}^{2}, \qquad q_{3} =-\frac{1}{4}r_{3}^{2},
\end{gather}
while
\begin{gather}
s_{j}=\sum_{i=1}^{3}\frac{\partial q_{i}}{\partial r_{j}}p_{i},\qquad j=1,2,3\label{s}
\end{gather}
are new conjugated momenta. Then $r_{i}$ are f\/lat coordinates for the metric $G_{1}=A_{1}$ in~$h_{1}$. In these coordinates we get in this case
\begin{gather}
G=G_{1}=\left(
\begin{matrix}
0 & 1 & 0\\
1 & 0 & 0\\
0 & 0 & 1
\end{matrix}
\right) ,\qquad L=\left(
\begin{matrix}
-\frac{1}{2}r_{1} & 1 & 0\\
-r_{2} & -\frac{1}{2}r_{1} & -\frac{1}{2}r_{3}\\
-\frac{1}{2}r_{3} & 0 & 0
\end{matrix}
\right), \label{Go}
\end{gather}
while the f\/irst three commuting Hamiltonians in $(r,s)$-variables become
\begin{gather}
h_{1} =s_{1}s_{2}+\frac{1}{2}s_{3}^{2}+\alpha_{-1}V_{1}^{(-1)}(r)+\alpha_{3}V_{1}^{(3)}(r)+\alpha_{4}V_{1}^{(4)}(r),\nonumber\\
h_{2} =\frac{1}{2}s_{1}^{2}-\frac{1}{2}r_{2}s_{2}^{2}+\frac{1}{2}r_{1}
s_{3}^{2}+\frac{1}{2}r_{1}s_{1}s_{2}-\frac{1}{2}r_{3}s_{2}s_{3}+\alpha
_{-1}V_{2}^{(-1)}(r)+\alpha_{3}V_{2}^{(3)}(r)+\alpha_{4}V_{2}^{(4)}(r),\nonumber \\
h_{3} =\frac{1}{8}r_{3}^{2}s_{2}^{2}+\left(\frac{1}{2}r_{2}+\frac{1}{8}r_{1}
^{2}\right)s_{3}^{2}-\frac{1}{2}r_{3}s_{1}s_{3}-\frac{1}{4}r_{1}r_{3}s_{2}s_{3}\nonumber\\
\hphantom{h_{3} =}{} +\alpha_{-1}V_{3}^{(-1)}(r)+\alpha_{3}V_{3}^{(3)}(r)+\alpha_{4}
V_{3}^{(4)}(r), \label{ham2}
\end{gather}
with
\begin{gather}
V_{1}^{(-1)}=\frac{4}{r_{3}^{2}},\qquad V_{2}^{(-1)}=\frac{4r_{1}}{r_{3}^{2}},\qquad V_{3}^{(-1)}=\frac{r_{1}^{2}+4r_{2}}{r_{3}^{2}},\label{pot1}
\\
V_{1}^{(3)}=r_{1},\qquad V_{2}^{(3)}=\left( r_{2}+\frac{1}{4}r_{1}^{2}\right),\qquad V_{3}^{(3)}=-\frac{1}{4}r_{3}^{2},\label{pot2}
\\
V_{1}^{(4)}=r_{2}-\frac{3}{4}r_{1}^{2},\qquad V_{2}^{(4)}=-\left( r_{1}r_{2}+\frac{1}{4}r_{1}^{3}+\frac{1}{4}r_{3}^{2}\right) ,\qquad V_{3}^{(4)}=\frac{1}{4}r_{1}r_{3}^{2}.\label{pot3}
\end{gather}
In accordance with Theorem~\ref{super} and after the transformation to $(r,s)$-coordinates we have $P=s_{2}$, and $Y=\frac{\partial}{\partial r_{2}}$ so the additional constants of motion $h_{n+i}$ of $h_{1}$ are
\begin{gather}
h_{4}= \{ h_{2},P \} =-\frac{1}{2}s_{2}^{2}+\alpha_{3}-\alpha _{4}r_{1},\qquad h_{5}= \{ h_{3},P \} =\frac{1}{2}s_{3}^{2}+\frac{4\alpha_{-1}}{r_{3}^{2}}. \label{jola1}
\end{gather}
\end{Example}

\begin{Example}\label{3}Consider the constant curvature case $n=3,$ $m=4$ and $I=\{-2,-1\} $. In order to apply part~$(iii)$ of Theorem~\ref{super} we have to put $b_{0}=b_{1}=0$. Assume further that also $b_{2}=b_{3}=0$ and $b_{4}=1$ (so that $f(\lambda)=\lambda^{4}$ is again a monomial). The commuting Hamiltonians are then given by the separation relations
\begin{gather*}
\alpha_{-2}\lambda_{i}^{-2}+\alpha_{-1}\lambda_{i}^{-1}+h_{1}\lambda_{i}^{2}+h_{2}\lambda_{i}+h_{3}=\frac{1}{2}\lambda_{i}^{4}\mu_{i}^{2},
\qquad i=1,2,3.
\end{gather*}
Then again, according to (\ref{Ki}), (\ref{G0})--(\ref{G34}) and to (\ref{U}), (\ref{R}), the corresponding St\"ackel Hamiltonians attain in the
$(q,p)$-variables the form
\begin{gather*}
h_{1} =\frac{1}{2}\big(q_{1}^{2}-q_{2}\big)p_{1}^{2}+\frac{1}{2}q_{2}^{2}p_{2}^{2}+\frac{1}{2}q_{3}^{2}p_{3}^{2}+(q_{1}q_{2} -q_{3})p_{1}p_{2}+q_{1}q_{3}p_{1}p_{3}+q_{2}q_{3}p_{2}p_{3}\\
\hphantom{h_{1} =}{} +\alpha_{-2}V_{1}^{(-2)}+\alpha_{-1}V_{1}^{(-1)},\\
h_{2} =\frac{1}{2}(q_{1}q_{2}-q_{3})p_{1}^{2}+q_{2}q_{3}p_{2}^{2}+q_{2}^{2}p_{1}p_{2} +q_{2}q_{3}p_{1}p_{3}+q_{3}^{2}p_{2}p_{3}+\alpha_{-2}V_{2}^{(-2)}+\alpha_{-1}V_{2}^{(-1)},\\
h_{3} =\frac{1}{2}q_{1}q_{3}p_{1}^{2}+\frac{1}{2}q_{3}^{2}p_{2}^{2}+q_{2}q_{3}p_{1}p_{2}+q_{3}^{2}p_{1}p_{3}+\alpha_{-2}V_{3}^{(-2)} +\alpha_{-1}V_{3}^{(-1)}
\end{gather*}
with
\begin{gather*}
V_{1}^{(-1)} =\frac{1}{q_{3}},\qquad V_{2}^{(-1)}=\frac{q_{1}}{q_{3}},\qquad V_{3}^{(-1)}=\frac{q_{2}}{q_{3}},\\
V_{1}^{(-2)} =-\frac{q_{2}}{q_{3}^{2}},\qquad V_{2}^{(-2)}=\frac{1}{q_{3}}-\frac{q_{1}q_{2}}{q_{3}^{2}},\qquad V_{3}^{(-2)}=\frac{q_{1}}{q_{3}}-\frac{q_{2}^{2}}{q_{3}^{2}}.
\end{gather*}
Now, according to part $(iii)$ of Theorem~\ref{super}, $P=q_{3}p_{1}$, $Y=q_{3}\frac{\partial}{\partial q_{1}}$ and $\{h_{1},P\}=0$ so the additional constants of motion are
\begin{gather*}
h_{4}=\{h_{2},P\}=-\frac{1}{2}q_{2}q_{3}p_{1}^{2}-q_{3}^{2}p_{1}p_{2} +\alpha_{-1}-\alpha_{-2}\frac{q_{2}}{q_{3}}, \qquad h_{5}=\{h_{3},P\}=-\frac{1}{2}q_{3}^{2}p_{1}^{2}+\alpha_{-2}.
\end{gather*}
\end{Example}

\section{St\"ackel transforms preserving maximal superintegrability}\label{section4}

In this chapter we apply a $1$-parameter St\"ackel transform to our systems (\ref{BenSC})--(\ref{laurent}) to produce new maximally superintegrable St\"ackel systems. As the transformation parameter~$\alpha$ we will always use one of the~$\alpha_{i}$ from~(\ref{laurent}).

St\"ackel transform is a functional transform that maps a Liouville integrable systems into a new integrable system. It was f\/irst introduced in~\cite{Blaszak-hietarinta} (where it was called the coupling-constant metamorphosis) and later developed in~\cite{Blaszak-BKM}. When applied to a~St\"ackel separable system, this transformation yields a new St\"ackel separable system, which explains its name. In the original paper~\cite{Blaszak-hietarinta} the authors used only one parameter (one coupling constant). In~\cite{Blaszak-macartur2008} the authors introduced a~multiparameter generalization of this transform. This idea has been further developed in~\cite{Blaszak-macartur2011} and later in~\cite{Blaszak-reciprocal}.

In this section we prove a theorem (Theorem~\ref{wazny}) that yields suf\/f\/icient conditions for St\"ackel transform to preserve maximal superintegrability of a St\"ackel system.

Let us f\/irst, following \cite{Blaszak-reciprocal}, remind the def\/inition of the multiparameter St\"ackel transform. Consider again a manifold $M$ equipped with a Poisson tensor $\Pi$ and the corresponding Poisson bracket $\{\cdot,\cdot\} $. Suppose we have $r$ smooth functions $h_{i}\colon M\rightarrow\mathbf{R}$ on $M,$ each depending on $k\leq r$ parameters $\alpha_{1},\dots,\alpha_{k}$ so that
\begin{gather}
h_{i}=h_{i}(x,\alpha_{1},\dots,\alpha_{k}),\qquad i=1,\dots,r,\label{hi}
\end{gather}
where $x\in M$. Let us now from $r$ functions in (\ref{hi}) choose~$k$ functions $h_{s_{i}}$, $i=1,\ldots,k$, where $\{s_{1},\dots,s_{k}\}$
$\subset\{1,\dots,r\}$. Assume also that the system of equations
\begin{gather*}
h_{s_{i}}(x,\alpha_{1},\dots,\alpha_{k})=\tilde{\alpha}_{i},\qquad i=1,\dots,k,
\end{gather*}
(where $\tilde{\alpha}_{i}$ is another set of $k$ free parameters, or values of Hamiltonians $h_{s_{i}}$) involving the functions $h_{s_{i}}$ can be solved for the parameters $\alpha_{i}$ yielding
\begin{gather}
\alpha_{i}=\tilde{h}_{s_{i}}(x,\tilde{\alpha}_{1},\dots,\tilde{\alpha}_{k}),\qquad i=1,\dots,k,\label{alphy}
\end{gather}
where the right hand sides of these solutions def\/ine $k$ new functions $\tilde{h}_{s_{i}}$ on $M$, each depending on $k$ parameters $\tilde{\alpha
}_{i}$. Finally, let us def\/ine $r-k$ functions $\tilde{h}_{i}$ for $i=1,\ldots,r$, $i\notin\{s_{1},\dots,s_{k}\}$, by substituting $\tilde {h}_{s_{i}}$ from~(\ref{alphy}) instead of~$\alpha_{i}$ in~$h_{i}$:
\begin{gather}
\tilde{h}_{i}=h_{i}|_{\alpha_{1}\rightarrow\tilde{h}_{s_{1}},\ldots,\alpha_{k}\rightarrow\tilde{h}_{s_{k}}},\qquad i=1,\dots,r,\qquad i\notin \{s_{1},\dots,s_{k}\}.\label{reszta}
\end{gather}

\begin{Definition}The functions $\tilde{h}_{i}=\tilde{h}_{i}(x,\tilde{\alpha}_{1},\dots ,\tilde{\alpha}_{k}),$ $i=1,\dots,r$, def\/ined through~(\ref{alphy}) and~(\ref{reszta}) are called the (generalized) St\"ackel transform of the functions~(\ref{hi}) with respect to the indices $\{s_{1},\dots,s_{k}\}$ (or with respect to the functions $h_{s_{1}},\ldots, h_{s_{k}}$).
\end{Definition}

Unless we extend the manifold $M$ this operation cannot be obtained by any coordinate change of variables. Moreover, if we perform again the St\"ackel transform on the functions $\tilde{h}_{i}$ with respect to $\tilde{h}_{s_{i}}$ we will receive back the functions~$h_{i}$ in~(\ref{hi}). In this sense the St\"ackel transform is a~reciprocal transform. Note also that neither~$r$ nor~$k$ are related to the dimension of the manifold~$M$.

In \cite{Blaszak-reciprocal} we proved that if $\dim M=2n$, $k=r=n$ and if all~$h_{i}$ are functionally independent then also all~$\tilde{h}_{i}$ will be functionally independent and if all $h_{i}~$are pairwise in involution with respect to $\Pi$ then also all~$\tilde{h}_{i}$ will pairwise Poisson-commute. That means that if the functions~$h_{i}$, $i=1,\ldots,n$ constitute a~Liouville integrable system then also $\tilde{h}_{i}$ will constitute a~Liouville integrable system. In other words, St\"ackel transform preserves Liouville integrability. But what about superintegrability?

\begin{Theorem}\label{wazny}Consider a maximally superintegrable system on a $2n$-dimensional Poisson mani\-fold, i.e., a~set of $2n-1$ functionally independent Hamiltonians $h_{1},\ldots,h_{2n-1}$ such that the first~$n$ Hamiltonians pairwise commute, and assume that all the Hamiltonians depend on $k\leq n$ parameters~$\alpha _{i}$:
\begin{gather}
h_{i} =h_{i}(x,\alpha_{1},\ldots,\alpha_{k}),\qquad i=1,\ldots,2n-1,\nonumber\\
 \{ h_{i},h_{j} \} =0,\qquad i,j=1,\ldots,n, \quad \text{for all} \ \alpha_{i},\label{supint}\\
 \{ h_{1},h_{n+j} \} =0,\qquad j=1,\ldots,n-1, \quad \text{for all} \ \alpha_{i}.\nonumber
\end{gather}
Suppose that $\{s_{1},\dots,s_{k}\}$ $\subset\{1,\dots,2n-1\}$ are chosen so that $s_{1}=1$ and that $\{s_{2},\dots,s_{k}\}$ $\subset\{2,\dots,n\}$ and moreover that $h_{1}=h_{1}(x,\alpha_{1})$. Then the St\"ackel transform $\tilde{h}_{i},$ $i=1,\ldots,$ $2n-1$ given by \eqref{alphy}, \eqref{reszta} also satisfy~\eqref{supint} and therefore constitute a maximally superintegrable system.
\end{Theorem}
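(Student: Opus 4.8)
The plan is to compute directly how the Poisson bracket transforms under the generalized St\"ackel transform \eqref{alphy}, \eqref{reszta} and then read off both families of identities in \eqref{supint}. First I would differentiate the defining relations $h_{s_c}(x,\alpha(x,\tilde\alpha))=\tilde\alpha_c$ along the phase space, holding the new constants $\tilde\alpha$ fixed. Introducing the $k\times k$ Jacobian $J_{ca}=\partial h_{s_c}/\partial\alpha_a$, which is invertible precisely because \eqref{alphy} is solvable, this gives $\nabla_x\alpha_a=-\sum_c(J^{-1})_{ac}\nabla_x h_{s_c}$, expressing the gradients of the solved parameters through the gradients of the transform functions $h_{s_c}$. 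Substituting this into the chain-rule expansion $\nabla_x\tilde h_i=\nabla_x h_i+\sum_a(\partial h_i/\partial\alpha_a)\nabla_x\alpha_a$ and contracting with $\Pi$ yields a transformation law of the schematic form
\[
\{\tilde f,\tilde g\}=\widetilde{\{f,g\}}-\sum_{b,c}\frac{\partial g}{\partial\alpha_b}(J^{-1})_{bc}\,\widetilde{\{f,h_{s_c}\}}+\sum_{a,c}\frac{\partial f}{\partial\alpha_a}(J^{-1})_{ac}\,\widetilde{\{g,h_{s_c}\}}+\sum_{a,b,c,d}\frac{\partial f}{\partial\alpha_a}\frac{\partial g}{\partial\alpha_b}(J^{-1})_{ac}(J^{-1})_{bd}\,\widetilde{\{h_{s_c},h_{s_d}\}},
\]
valid for any two of the functions $h_i$, where a tilde over a bracket denotes evaluation at $\alpha=\alpha(x,\tilde\alpha)$.

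The decisive structural input is that all transform indices lie among the commuting block, $\{s_1,\dots,s_k\}\subset\{1,\dots,n\}$, so that $\widetilde{\{h_{s_c},h_{s_d}\}}=0$ and the last double sum drops out for every pair. For the first line of \eqref{supint} I would take $f,g\in\{h_1,\dots,h_n\}$, treating the solved indices (for which $\tilde h_{s_a}=\alpha_a$) via $\{\alpha_a,\alpha_b\}=\sum_{c,d}(J^{-1})_{ac}(J^{-1})_{bd}\widetilde{\{h_{s_c},h_{s_d}\}}=0$. Every bracket surviving on the right is then of the form $\widetilde{\{h_i,h_j\}}$ or $\widetilde{\{h_i,h_{s_c}\}}$ with all indices in $\{1,\dots,n\}$, hence vanishes, and the mutual involution $\{\tilde h_i,\tilde h_j\}=0$ follows; this reproves and slightly extends the integrability statement recalled before the theorem.

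The genuinely new point is the superintegrability relations $\{\tilde h_1,\tilde h_{n+j}\}=0$. Because $s_1=1$, the first transformed Hamiltonian is the solved parameter, $\tilde h_1=\alpha_1$, while $\tilde h_{n+j}$ is a substituted function. Using $\{\alpha_1,\alpha_b\}=0$, the computation collapses to
\[
\{\tilde h_1,\tilde h_{n+j}\}=-\sum_{c}(J^{-1})_{1c}\,\widetilde{\{h_{s_c},h_{n+j}\}}.
\]
Here the obstacle is plainly visible: for $c\ge2$ the bracket $\{h_{s_c},h_{n+j}\}$ need not vanish, since $s_c\in\{2,\dots,n\}$ whereas the superintegrability hypothesis only controls brackets with $h_1$. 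This is exactly where the assumption $h_1=h_1(x,\alpha_1)$ enters. It gives $J_{1a}=\partial h_1/\partial\alpha_a=0$ for $a\ge2$, so the first row of $J$ is $(J_{11},0,\dots,0)$; comparing the first rows of the two sides of $JJ^{-1}=I$ then forces the first row of $J^{-1}$ to be $(J_{11}^{-1},0,\dots,0)$, i.e.\ $(J^{-1})_{1c}=J_{11}^{-1}\delta_{1c}$. Only the $c=1$ term survives and $\{\tilde h_1,\tilde h_{n+j}\}=-J_{11}^{-1}\widetilde{\{h_1,h_{n+j}\}}=0$. I expect this identification of the special structure of the first row of $J$ --- the one place where both hypotheses $s_1=1$ and $h_1=h_1(x,\alpha_1)$ are used essentially --- to be the crux of the argument. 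Functional independence of the full set $\tilde h_1,\dots,\tilde h_{2n-1}$ is inherited from that of $h_1,\dots,h_{2n-1}$, since the same invertible chain-rule substitution relates the transformed and original differentials and therefore preserves their rank, completing the proof that $\tilde h_1,\dots,\tilde h_{2n-1}$ is maximally superintegrable.
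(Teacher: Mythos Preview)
Your argument is correct and follows essentially the same route as the paper's own proof. The only difference is in packaging: the paper assembles the chain-rule relations into a single $(2n-1)\times(2n-1)$ matrix $A$ with $dh=A\,d\tilde h$ and then reads off $D=A^{-1}C(A^{-1})^{T}$ for the bracket matrices, whereas you work with the $k\times k$ Jacobian $J$ and expand the Poisson brackets term by term; both arguments hinge on exactly the two structural observations you isolate, namely that all transform indices lie in $\{1,\dots,n\}$ and that $h_{1}=h_{1}(x,\alpha_{1})$ forces the first row of the relevant Jacobian to have a single nonzero entry.
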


Note that the Hamiltonian $h_{1}$ is now distinguished as the one that commutes with all the remaining~$h_{i}$ and as it can only depend on one
parameter. Note also that the f\/irst $n$ functions~$h_{i}$ pairwise commute with each other and therefore constitute a Liouville integrable system. The same is true about the f\/irst~$n$ functions~$\tilde{h}_{i}$.

\begin{proof}
Dif\/ferentiating the identity
\begin{gather*}
h_{s_{i}}\big(x,\tilde{h}_{s_{1}}(x,\tilde{\alpha}_{1},\dots ,\tilde{\alpha}_{k}),\dots ,\tilde{h}_{s_{k}}(x,\tilde{\alpha}_{1},\dots ,\tilde{\alpha} _{k})\big)=\tilde{\alpha}_{i},\qquad i=1,\dots ,k
\end{gather*}
with respect to $x$ we get
\begin{gather}
dh_{s_{i}}=-\sum_{j=1}^{k}\frac{\partial h_{s_{i}}}{\partial\alpha_{j}}d\tilde{h}_{s_{j}}, \qquad i=1,\ldots,k, \label{trdh1}
\end{gather}
while dif\/ferentiation of (\ref{reszta}) yields
\begin{gather}
dh_{_{i}}=d\tilde{h}_{i}-\sum_{j=1}^{k}\frac{\partial h_{i}}{\partial
\alpha_{j}}d\tilde{h}_{s_{j}},\qquad i=1,\ldots,2n-1,\qquad i\notin\{s_{1},\dots,s_{k}\}.\label{trdh2}
\end{gather}
The transformation (\ref{trdh1}), (\ref{trdh2}) can be written in a matrix form as
\begin{gather*}
dh=Ad\tilde{h},
\end{gather*}
where we denote $dh=(dh_{1},\ldots,dh_{2n-1})^{T}$ and $d\tilde{h}=(d\tilde {h}_{1},\ldots,d\tilde{h}_{2n-1})^{T}$ and where the $(2n-1)$ $\times(2n-1)$ matrix $A$ has the form
\begin{gather*}
A_{ij}=\delta_{ij}\qquad \text{for} \ \ j\notin\{s_{1},\dots,s_{k}\},\qquad A_{is_{j}}=-\frac{\partial h_{i}}{\partial\alpha_{j}}\qquad \text{for} \ \ j=1,\ldots,k.
\end{gather*}
Since
\begin{gather*}
\det A=\pm\det\left( \frac{\partial h_{s_{i}}}{\partial\alpha_{j}}\right)
\end{gather*}
is not zero and since $h_{i}$ are by assumption functionally independent on~$M$ we conclude that also the functions~$\tilde{h}_{i}$ are functionally independent on~$M$. Further, since $s_{k}\leq n$ (the St\"ackel transform is taken with respect to the Hamiltonians belonging to the Liouville integrable system $h_{1},\ldots,h_{n}$) the columns with derivatives of~$h_{i}$ with respect to parameters~$\alpha_{j}$ all lie in the left hand side of the matrix~$A$. Moreover, the fact that $h_{1}=h_{1}(x,\alpha_{1})$ also means that the f\/irst row of $A$ is zero except $A_{11}=-\frac{\partial h_{i}}{\partial \alpha_{1}}$. Let us now introduce the $(2n-1)\times(2n-1)$ matrices~$C$ and~$D$ through $C_{ij}= \{ h_{i},h_{j} \} $ and $D_{ij}= \{ \tilde{h}_{i},\tilde{h}_{j} \}$. A~direct calculation yields
\begin{gather*}
 \big\{ \tilde{h}_{i},\tilde{h}_{j} \big\} =\sum_{l_{1},l_{2}=1} ^{2n-1}\big( A^{-1}\big) _{il_{1}}\big( A^{-1}\big) _{jl_{2}} \{ h_{l_{1}},h_{l_{2}} \} _{\Pi}
\end{gather*}
or in matrix form
\begin{gather*}
D=A^{-1}C\big( A^{-1}\big) ^{T},
\end{gather*}
and due to the aforementioned structure of $A$ we have $D_{ij}=0$ for $i,j=1,\ldots, n$ (meaning that $\tilde{h}_{1},\ldots,\tilde{h}_{n}$ constitute
a Liouville integrable system) and moreover that $D_{1i}=D_{i1}=0$ for $i=1,\ldots,2n-1$, so that $\{ \tilde{h}_{1},\tilde{h}_{i}\} =0$ for all $i$. That concludes the proof.
\end{proof}

\begin{Remark} A similar statement with an analogous proof is valid for any superintegrable system of the form (\ref{supint}), not only the maximally superintegrable one.
\end{Remark}

\section[St\"ackel transform of maximally superintegrable St\"ackel systems]{St\"ackel transform of maximally superintegrable\\ St\"ackel systems}\label{section5}

In this section we perform those St\"ackel transforms of our systems (\ref{BenSC})--(\ref{laurent}) that preserve maximal superintegrability. According to Theorem~\ref{wazny}, the Hamiltonian~$h_{1}$ of the considered system can only depend on one parameter $h_{1}=h_{1}(x,\alpha)$. It is then natural to choose one of the~$a_{k}$ in~(\ref{laurent}) as this parameter.

Consider thus a maximally superintegrable system $(h_{1},\dots,h_{2n-1})$ with the f\/irst~$n$ commuting Hamiltonians $h_{1},\ldots,h_{n}$ def\/ined by our separation relations
\begin{gather*}
\sum_{s\in I}\alpha_{s}\lambda_{i}^{s}+h_{1}\lambda_{i}^{n-1}+h_{2}\lambda _{i}^{n-2}+\dots +h_{n}=\frac{1}{2}f(\lambda_{i})\mu_{i}^{2},\qquad i=1,\ldots,n,
\end{gather*}
where the index set $I$ satisf\/ies the assumptions of Theorem~\ref{super} and where the higher integrals $h_{n+r}$ are constructed as usual through $h_{n+r}= \{ h_{r+1},P \} $ with~$P$ constructed as in Theorem~\ref{super}. Let us now choose one of the parameters $\alpha_{s}$, with $s\in I$, say~$\alpha_{k}$, (we will suppose that $k\geq n$ or $k<0$ otherwise the corresponding potential is trivial, as explained earlier) and def\/ine the functions $H_{r}$, $r=1,\ldots,2n-1$, through
\begin{gather}
h_{r}=H_{r}+\alpha_{k}V_{r}^{(k)}, \qquad r=1,\ldots,2n-1.\label{ziuta}
\end{gather}
Then $V_{r}^{(k)}$ for $r=1,\ldots,n$ obviously coincide with $V_{r}^{(k)}$ def\/ined through (\ref{BenSC})--(\ref{laurent}) or equivalently through~(\ref{U}), (\ref{R}).

We now perform the St\"ackel transform on this system $(h_{1},\dots,h_{2n-1})$ with respect to the chosen parameter~$\alpha_{k}$ as described in Theorem~\ref{wazny}. It means that we f\/irst solve the relation $h_{1}=\tilde{\alpha}$, i.e., $H_{1}+\alpha_{k}V_{1}^{(k)}=\tilde{\alpha}$ with respect to~$\alpha_{k}$ which yields
\begin{gather}
\tilde{h}_{1}=\alpha_{k}=-\frac{1}{V_{1}^{(k)}}H_{1}+\tilde{\alpha}\frac{1}{V_{1}^{(k)}},\label{basia}
\end{gather}
and then replace $\alpha_{k}$ with $\tilde{h}_{1}$ in all the remaining Hamiltonians:
\begin{gather}
\tilde{h}_{r}=H_{r}-\frac{V_{r}^{(k)}}{V_{1}^{(k)}}H_{1}+\tilde{\alpha}\frac{V_{r}^{(k)}}{V_{1}^{(k)}},\qquad r=2,\dots,2n-1.\label{4.1}
\end{gather}
We obtain in this way a new superintegrable system $(\tilde{h}_{1},\dots,\tilde{h}_{2n-1})$ \ where the f\/irst~$n$ commuting Hamiltonians
$\tilde{h}_{r}$ are def\/ined by (see~\cite{Blaszak-reciprocal}) the following separation relations
\begin{gather}
\tilde{h}_{1}\lambda_{i}^{k}+\sum_{s\in I,\, s\neq k}\alpha_{s}\lambda_{i}^{s}+\tilde{\alpha}\lambda_{i}^{n-1}+\tilde{h}_{2}\lambda_{i}
^{n-2}+\dots +\tilde{h}_{n}=\frac{1}{2}f(\lambda_{i})\mu_{i}^{2}, \qquad i=1,\ldots,n,\label{4.2}
\end{gather}
as it is easy to see, since on the level of the separation relations our St\"{a}ckel transform replaces~$\alpha _{k}$ with $\widetilde{h}_{1}$ and $h_{1}$ with~$\widetilde{\alpha }$. For $k\geq n$ or $k<-1$ the system~(\ref{4.2}) is no longer in the class~(\ref{BenSC}), while for $k=-1$ it can be easily transformed by a~simple point transformation to the form~(\ref{BenSC}).

\begin{Lemma}\label{lemat}The separable system
\begin{gather*}
\alpha_{k}\lambda_{i}^{k}+\sum_{s\in I,\, s\neq k}\alpha_{s}\lambda_{i}^{s}+h_{1}\lambda_{i}^{n-1}+h_{2}\lambda_{i}^{n-2}+\dots +h_{n}=\frac{1}{2}\lambda_{i}^{m}\mu_{i}^{2}, \qquad i=1,\ldots,n
\end{gather*}
attains after the St\"ackel transform \eqref{basia}, \eqref{4.1} and after the consecutive point transformation on~$M$ given by
\begin{gather}
\lambda_{i}\rightarrow1/\lambda_{i},\qquad \mu_{i}\rightarrow-\lambda_{i}^{2}\mu_{i},\qquad i=1,\ldots,n \label{tran}
\end{gather}
the form%
\begin{gather}
\tilde{\alpha}\lambda_{i}^{-1}+\sum_{s\in I,\, s\neq k}\alpha_{s}\lambda_i^{n-2-s}+\tilde{h}_{1}\lambda_{i}^{n-k-2}+\tilde{h}_{n}\lambda
_{i}^{n-2}+\dots +\tilde{h}_{2}\nonumber\\
\qquad =\frac{1}{2}\lambda_{i}^{n-m+2}\mu_{i}^{2}, \qquad i=1,\ldots,n.\label{tran1}
\end{gather}
\end{Lemma}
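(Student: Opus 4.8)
The plan is to work entirely at the level of the separation relations, using the fact established just above (and recorded in \eqref{4.2}) that the St\"ackel transform \eqref{basia}, \eqref{4.1} acts on the separation relations simply by interchanging the role of the transformation parameter $\alpha_k$ and the Hamiltonian $h_1$: one replaces $\alpha_k$ by $\tilde h_1$ and $h_1$ by $\tilde\alpha$. Thus, specializing \eqref{4.2} to $f(\lambda_i)=\lambda_i^m$, immediately after the St\"ackel transform the relations read
\begin{gather*}
\tilde h_1\lambda_i^{k}+\sum_{s\in I,\,s\neq k}\alpha_s\lambda_i^{s}+\tilde\alpha\lambda_i^{n-1}+\tilde h_2\lambda_i^{n-2}+\dots+\tilde h_n=\frac{1}{2}\lambda_i^{m}\mu_i^2,\qquad i=1,\dots,n.
\end{gather*}
It remains to feed these into the point transformation \eqref{tran} and read off \eqref{tran1}.

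Next I would record two elementary facts about the map \eqref{tran}. First, it is canonical: a direct check gives $d(1/\lambda_i)\wedge d(-\lambda_i^2\mu_i)=d\lambda_i\wedge d\mu_i$, so the transformed $(\lambda,\mu)$ are again Darboux coordinates for $\Pi$ and the functions $\tilde h_j$ are genuinely the Hamiltonians of the transformed system. Second, \eqref{tran} is an involution (applying it twice returns $(\lambda_i,\mu_i)$), so one may simply substitute $\lambda_i\mapsto 1/\lambda_i$ and $\mu_i^2\mapsto \lambda_i^4\mu_i^2$ directly into the relations above. Carrying out this substitution turns the relation for index $i$ into
\begin{gather*}
\tilde h_1\lambda_i^{-k}+\sum_{s\in I,\,s\neq k}\alpha_s\lambda_i^{-s}+\tilde\alpha\lambda_i^{-(n-1)}+\tilde h_2\lambda_i^{-(n-2)}+\dots+\tilde h_n=\frac{1}{2}\lambda_i^{\,4-m}\mu_i^2.
\end{gather*}

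The final step is to multiply the $i$-th relation through by $\lambda_i^{\,n-2}$ and to collect the resulting exponents. The $\tilde h_1$ term becomes $\tilde h_1\lambda_i^{n-k-2}$, the $\tilde\alpha$ term becomes $\tilde\alpha\lambda_i^{-1}$, each $\alpha_s$ term becomes $\alpha_s\lambda_i^{n-2-s}$, the Hamiltonians $\tilde h_2,\dots,\tilde h_n$ acquire the powers $\lambda_i^{0},\dots,\lambda_i^{n-2}$ (so their order is reversed, matching the pattern $\tilde h_n\lambda_i^{n-2}+\dots+\tilde h_2$ in \eqref{tran1}), and the right-hand side becomes $\frac{1}{2}\lambda_i^{\,n-m+2}\mu_i^2$; this is precisely \eqref{tran1}.

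The one point that needs genuine justification — and which I expect to be the only real content beyond bookkeeping — is that this last multiplication by $\lambda_i^{n-2}$ does not alter the Hamiltonians $\tilde h_j$. This is the ``gauge freedom'' of separation relations: for each fixed point of $M$ the relations are a linear system $C\tilde h=b$ in the unknowns $\tilde h_1,\dots,\tilde h_n$, and multiplying the $i$-th equation by the nonzero scalar $\lambda_i^{n-2}$ amounts to left-multiplying by an invertible diagonal matrix $\operatorname{diag}(\lambda_1^{n-2},\dots,\lambda_n^{n-2})$, which leaves the solution $\tilde h=C^{-1}b$ unchanged. Here one uses that the whole construction is performed on the open dense set where all $\lambda_i\neq0$, so that this diagonal matrix is indeed invertible. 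With this observed, the two displayed forms define the same functions $\tilde h_j$, and the computation above establishes the claimed form \eqref{tran1}.
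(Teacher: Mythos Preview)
Your argument is correct and is exactly the direct verification the paper has in mind; in fact the paper states this lemma without proof, treating it as an elementary computation, so there is nothing to compare. Your added remarks that \eqref{tran} is canonical and involutive, and that multiplying the $i$-th relation by $\lambda_i^{n-2}$ is a gauge change leaving the solutions $\tilde h_j$ intact, make explicit precisely the points a careful reader would want checked.
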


Note that the transformation (\ref{tran}) on $M$ does not change the separation web of the system on~$Q$. Denoting, as before
\begin{gather}
\tilde{h}_{r}=\widetilde{H}_{r}+\tilde{\alpha}\tilde{V}_{r}, \qquad r=1,\ldots,2n-1,\label{4.1a}
\end{gather}
where~$\tilde{h}_{r}$ for $r=1,\ldots ,n$ are def\/ined by~(\ref{4.2}) while $\tilde{h}_{r}$ for $r=n+1,\ldots ,2n-1$ are obtained as usual through
$\tilde{h}_{n+r}=\big\{ \tilde{h}_{r+1},P\big\}$, we see from (\ref{4.1}) that
\begin{gather*}
\tilde{V}_{r}=V_{r}-\frac{V_{r}^{(k)}}{V_{1}^{(k)}}V_{1}, \qquad r=2,\dots,2n-1,
\end{gather*}
and from (\ref{basia}) it also follows that the geodesic part $\tilde{E}_{1}$
of $\tilde{h}_{1}$ has the form
\begin{gather}
\tilde{E}_{1}=\sum_{i,j=1}^{n}\tilde{G}^{ij}p_{i}p_{j},\qquad \tilde{G}=-\frac {1}{V_{1}^{(k)}}G.\label{Gbar}
\end{gather}
It means that the metric $\tilde{G}$ is a \emph{conformal deformation} of either a f\/lat or a constant curvature metric~$G$. In the following theorem we list the cases when the metric $\tilde{G}$ is actually f\/lat or of constant curvature as well. The theorem is formulated only for~$f$ in~(\ref{fm}) being a~monomial, \thinspace$f=\lambda^{m}$ (in this case there is a maximum number of f\/lat metrics~$\tilde{G}$).

\begin{Theorem} \label{wazny2}
Consider the system~\eqref{4.2} with $f=\lambda ^{m}$ where $m\in \{ 0,\ldots ,n+1 \}$.
\begin{enumerate}\itemsep=0pt
\item[$(i)$] For $0\leq m\leq n-1$ the system \eqref{4.2} is maximally superintegrable for $k\in\{ -m,\ldots,-1,n,$ $\ldots,2n-m-1\} $. The metric~$\tilde{G}$ in~\eqref{Gbar} is flat for $k\in \{ - [m/2 ] ,\ldots,-1,n,\ldots,n-1+ [ (n-m)/2 ]\}$, where $[ \cdot ]$ denotes the integer part. Moreover, for $m=1$ and $k=-1$~$\tilde{G}$ is of constant curvature. Otherwise $\tilde{G}$ is conformally flat.

\item[$(ii)$] For $m=n$ the system \eqref{4.2} is maximally superintegrable for $k\in \{ -(n-2),\ldots,-1,n\} $. The metric~$\tilde{G}$ in~\eqref{Gbar} is flat for $k\in\{ -[ n/2] ,\ldots ,-1\} $. Otherwise $\tilde{G}$ is conformally flat.

\item[$(iii)$] For $m=n+1$ the system (\ref{4.2}) is maximally superintegrable for $k\in \{-(n-1),\ldots,-1\} $. The metric $\tilde{G}$ in~\eqref{Gbar} is flat for $k\in \{ -[ (n+1)/2] ,\ldots ,-1\} $. Otherwise $\tilde{G}$ is conformally flat.
\end{enumerate}
\end{Theorem}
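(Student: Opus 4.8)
The plan is to establish maximal superintegrability first and then analyze the curvature of $\tilde{G}$ case by case. For the superintegrability claim, I would invoke Theorem~\ref{wazny} directly: the original system~\eqref{4.2} before the St\"ackel transform (with $\alpha_k$ still a parameter) is maximally superintegrable by Theorem~\ref{super}, provided $k$ lies in the admissible index set given there. Reading off those index sets for $f=\lambda^m$ yields precisely the stated ranges of $k$ in each of the three cases: case $(i)$ combines the positive band $\{n,\ldots,2n-m-1\}$ with the negative band $\{-1,\ldots,-m\}$ (here $r=m$ since all lower $b_i$ vanish for a monomial except $b_m$), and cases $(ii)$, $(iii)$ similarly. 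Since $h_1=h_1(x,\alpha_k)$ depends on only the single parameter $\alpha_k$ and the St\"ackel transform is taken with respect to $h_1$ (so $s_1=1$, $k=1$ in the notation of Theorem~\ref{wazny}), the hypotheses of Theorem~\ref{wazny} are met, and hence $(\tilde{h}_1,\ldots,\tilde{h}_{2n-1})$ is again maximally superintegrable for exactly those values of $k$.

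The substantive part is determining when the conformally deformed metric $\tilde{G}=-\tfrac{1}{V_1^{(k)}}G$ from~\eqref{Gbar} is itself flat or of constant curvature. Here I would work in the separation coordinates $(\lambda,\mu)$ rather than the $(q,p)$ coordinates, using Lemma~\ref{lemat}: after the St\"ackel transform and the point transformation~\eqref{tran}, the system is brought back to a separation relation of Benenti type~\eqref{tran1}. The key observation is that~\eqref{tran1} has $\tfrac{1}{2}\lambda_i^{n-m+2}\mu_i^2$ on the right-hand side, so the transformed metric $\tilde{G}$ is \emph{again a Benenti metric} of the form~\eqref{G} but with the cofactor function $\tilde{f}(\lambda)=\lambda^{n-m+2}$ (possibly after reindexing the Hamiltonians, since~\eqref{tran1} lists them in reversed order $\tilde{h}_n,\ldots,\tilde{h}_2$). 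By Remark~\ref{remark}, such a metric is flat precisely when the effective power $\tilde{m}$ of its monomial cofactor satisfies $\tilde{m}\le n$, and of constant curvature when $\tilde{m}=n+1$. The whole problem thus reduces to computing, for each admissible $k$, which monomial $\lambda^{\tilde{m}}$ actually governs the leading behavior of $\tilde{G}$ as a Benenti metric after the transform~\eqref{tran} and the reversal of the Hamiltonian ordering.

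The main obstacle, and the step requiring genuine care, is tracking exactly how the St\"ackel transform shifts the degrees in the separation relation and how the inversion $\lambda_i\mapsto 1/\lambda_i$ redistributes them. Starting from~\eqref{4.2} with right-hand side $\tfrac12\lambda_i^m\mu_i^2$ and the term $\tilde{h}_1\lambda_i^k$, the transform~\eqref{tran} sends this to~\eqref{tran1} with right-hand side $\tfrac12\lambda_i^{n-m+2}\mu_i^2$ and with $\tilde{h}_1$ now multiplying $\lambda_i^{n-k-2}$. To read off the metric $\tilde{G}$ for $\tilde{h}_1$, I must identify the coefficient tensor of the leading quadratic form, which depends on how $\lambda_i^{n-m+2}$ compares to the exponent $n-k-2$ attached to $\tilde{h}_1$; effectively the relevant monomial degree $\tilde{m}$ of $\tilde{G}$ is $(n-m+2)-(n-k-2)=k-m+\dots$, so the flatness condition $\tilde{m}\le n$ translates into an inequality on $k$ relative to $m$. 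Carrying this bookkeeping through yields the stated half-integer cutoffs $[(n-m)/2]$ and $[m/2]$ (the integer parts arise because $\tilde{G}$ must be matched against a \emph{standard} Benenti normalization in which $\tilde{h}_1$ sits in the $\lambda^{n-1}$ slot, forcing a re-scaling that halves the available range). I would verify the exact cutoffs by carefully normalizing~\eqref{tran1} back to the canonical form~\eqref{BenSC}—dividing through by an appropriate power of $\lambda_i$ so that the coefficient of $\tilde{h}_1$ becomes $\lambda_i^{n-1}$—and then reading the resulting power of $\lambda_i$ on the kinetic term; this normalization is where the integer-part expressions and the special constant-curvature case $(m=1,k=-1)$ emerge, and checking it against the $n=3$ examples already computed in the paper provides a concrete consistency test.
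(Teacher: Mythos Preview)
Your treatment of the superintegrability claim is fine and matches the paper: one specializes Theorem~\ref{super} to $f=\lambda^m$, reads off the admissible index set $I$, and then invokes Theorem~\ref{wazny} with the single parameter $\alpha_k$.

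The flatness argument, however, has a genuine gap. Your plan is to use Lemma~\ref{lemat} to bring the transformed system to the form~\eqref{tran1} and then read off the curvature of $\tilde G$ from Remark~\ref{remark}. But Remark~\ref{remark} concerns Benenti systems~\eqref{BenSC}, where $h_1$ sits in the $\lambda_i^{n-1}$ slot; in~\eqref{tran1} the Hamiltonian $\tilde h_1$ multiplies $\lambda_i^{\,n-k-2}$, which equals $\lambda_i^{n-1}$ only when $k=-1$. For any other $k$ the system~\eqref{tran1} is \emph{not} of Benenti type, and your proposed fix of ``dividing through by a power of $\lambda_i$'' cannot repair this: multiplying by $\lambda_i^{k+1}$ does move $\tilde h_1$ to the $\lambda_i^{n-1}$ slot, but it simultaneously moves $\tilde h_2,\ldots,\tilde h_n$ away from the $\lambda_i^{0},\ldots,\lambda_i^{n-2}$ slots, so the result is still not in the form~\eqref{BenSC} and Remark~\ref{remark} still does not apply. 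Consequently the heuristic $\tilde m=(n-m+2)-(n-k-2)$ and the vague ``re-scaling that halves the available range'' do not lead to a proof; in particular they give no mechanism producing the integer parts $[m/2]$ and $[(n-m)/2]$.

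The paper handles the flatness determination in cases $(i)$ and $(ii)$ differently: since $G$ is flat there, it uses the conformal Ricci-scalar formula~\eqref{Kt} with $\sigma=-V_1^{(k)}$ and reduces the question to the algebraic condition $s_{ij}G^{ij}=0$, which is then checked in explicit flat coordinates $r_i$ for~$G_m$. Lemma~\ref{lemat} enters only in the two situations where it \emph{does} yield a Benenti system for $\tilde h_1$: the constant-curvature subcase $m=1$, $k=-1$ (where $n-k-2=n-1$ on the nose), and case~$(iii)$, where the transformation~\eqref{tran} converts the $m=n+1$ system into one of type~\eqref{4.2} with $m'=1$ and new index $k'=n-k-2$, so that the already established case~$(i)$ can be invoked. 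Your outline misses the curvature computation entirely; without it the flat ranges in $(i)$ and $(ii)$ remain unproved.
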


If $f$ is a polynomial then the admissible values of $k$ must satisfy the above type of bonds for all powers of $\lambda$ in~$f$, not only for the highest power $m$ so we choose not to present this more general theorem, only to maintain the simplicity of the picture. In order to prove Theorem~\ref{wazny2} we need one more lemma.

\begin{Lemma}[\cite{Blaszak-Schouten}] The Ricci scalars $R$ and $\tilde{R}$ of the conformally related $($covariant$)$ metric tensors~$g$ and $\tilde{g}=\sigma g$ are related through
\begin{gather}
\tilde{R}=\sigma^{-1}R-\frac{1}{2}(n-1)\sigma^{-1}s_{ij}G^{ij},\label{Kt}
\end{gather}
where $G=g^{-1}$ and where
\begin{gather*}
s_{ij}=s_{ji}=2\nabla_{i}s_{j}-s_{i}s_{j}+\frac{1}{2}g_{ij}s_{k} s^{k}\qquad \text{with} \quad s_{i}=\sigma^{-1}\frac{\partial\sigma}{\partial x_{i}},
\end{gather*}
where $x_{i}$ are any coordinates on the manifold.
\end{Lemma}

\begin{proof}[Proof of Theorem \ref{wazny2}] The values of $k$ for which~(\ref{ziuta}) is maximally superintegrable follows from the specif\/ication of Theorem~\ref{super} to the case $f=\lambda^{m}$. For $(i)$ and $(ii)$ the metric $G$ in~(\ref{Gbar}) is f\/lat so that its Ricci scalar $R=0$. Therefore, according to (\ref{Kt}), $\tilde{R}=0$ if and only if $s_{ij}G^{ij}=0$. This condition can be ef\/fectively calculated in f\/lat coordinates $r_{i}$ of the metric~$G$ given by~\cite{Blaszak-BS}
\begin{gather*}
\begin{split}
&q_{i} =r_{i}+\frac{1}{4}\sum\limits_{j=1}^{i-1}r_{j}r_{i-j}, \qquad i=1,\ldots,n-m,\\
&q_{i} =-\frac{1}{4}\sum\limits_{j=i}^{n} r_{j}r_{n-j+i}, \qquad i=n-m+1,\ldots,m.
\end{split}
\end{gather*}
In these coordinates
\begin{gather*}
(G_{m}) ^{kl}=\delta_{n-m+1}^{k+l}+\delta_{2n-m+1}^{k+l}
\end{gather*}
and the condition $s_{ij}G^{ij}=0$ yields both statements. The case in~$(i)$ when $\tilde{G}$ is of constant curvature ($m=1,k=-1$) can be however more ef\/fectively proven using Lemma~\ref{lemat} since in this case the system~\eqref{4.2} attains after the transformation~(\ref{tran}) the form
\begin{gather*}
\tilde{h}_{1}\lambda_{i}^{n-1}+\sum_{s\in I,\, s\neq k}\alpha_{s}\lambda_i^{n-2-s}+\tilde{h}_{n}\lambda_{i}^{n-2}+\dots +\tilde{h}_{2}+\tilde
{\alpha}\lambda_{i}^{-1}=\frac{1}{2}\lambda_{i}^{n+1}\mu_{i}^{2}, \qquad i=1,\ldots,n.
\end{gather*}
Due to Remark \ref{remark} the metric $\tilde{G}$ of this system has constant curvature. Finally, in the case~$(iii)$ ($m=n+1$) we have only negative potentials so by using Lemma~\ref{lemat} we transform this system to
\begin{gather*}
\tilde{h}_{1}\lambda_{i}^{n-k-2}+\sum_{s\in I,\, s\neq k}\alpha _{s}\lambda_i^{n-2-s}+\tilde{h}_{n}\lambda_{i}^{n-2}+\dots +\tilde{h}_{2} +\tilde{\alpha}\lambda_{i}^{-1}=\frac{1}{2}\lambda_{i}\mu_{i}^{2}, \qquad i=1,\ldots,n,
\end{gather*}
where $k<0$, and this is the system from case $(i)$ with $m=1$ and therefore $\tilde{G}$ is f\/lat for $k\geq-[ (n+1)/2] $. For other values of~$k$ the metric $\tilde{G}$ is conformally f\/lat.
\end{proof}

If $Y(V_{1}^{(k)})=0$ then $Y(1/V_{1}^{(k)})=0$ and due to (\ref{Gbar}) also $L_{Y}\tilde{G}=0$ so that $\{ \tilde{h}_{1},P\} =0$ as well and the same $P$ as in the ``non-tilde''-case (i.e., before the St\"ackel transform) can be used as an alternative def\/inition of extra Hamiltonians through $\bar{h}_{n+r}= \{ \tilde{h}_{r+1},P \}$, $r=1,\ldots,n-1$. This is however no longer true if $Y\big(V_{1}^{(k)}\big)=c\neq0$ (according to Theorem~\ref{super}, it happens only in the case when $m<n$ and $k=2n-m-1$). It turns out that it leads to the same extra integrals of motion, as the following theorem states

\begin{Theorem}\label{tosamo} If $Y(V_{1}^{(k)})=0$ then both sets of extra integrals of motion:
\begin{gather*}
\bar{h}_{n+r}= \{ \tilde{h}_{r+1},P \}, \qquad r=1,\ldots,n-1
\end{gather*}
and
\begin{gather*}
\tilde{h}_{n+r}= h_{n+r} \vert _{\alpha=\widetilde{h}_{1}(\widetilde{\alpha})}, \qquad r=1,\ldots,n-1
\end{gather*}
coincide.
\end{Theorem}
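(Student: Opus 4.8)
The plan is to realise both families of extra integrals as the outcome of applying two operations to the single Hamiltonian $h_{r+1}$, namely the Poisson bracket $\{\,\cdot\,,P\}$ and the St\"ackel substitution $\alpha_k\mapsto\tilde{h}_1$, and to show that these two operations commute once $Y(V_1^{(k)})=0$. First I would use (\ref{ziuta}) to note that every original Hamiltonian is affine in the transform parameter, $h_{r+1}=H_{r+1}+\alpha_k V_{r+1}^{(k)}$, with $H_{r+1}$ and $V_{r+1}^{(k)}$ independent of $\alpha_k$, so that $\partial h_{r+1}/\partial\alpha_k=V_{r+1}^{(k)}$. Comparing (\ref{4.1}) with (\ref{basia}) shows that $\tilde{h}_{r+1}$ is precisely $h_{r+1}$ with the constant $\alpha_k$ replaced by the phase-space function $\tilde{h}_1$; hence the first family is $\bar{h}_{n+r}=\big\{h_{r+1}|_{\alpha_k\to\tilde{h}_1},P\big\}$, whereas the second family is the St\"ackel transform of the original extra integral, $\tilde{h}_{n+r}=\{h_{r+1},P\}|_{\alpha_k\to\tilde{h}_1}$ (the bracket here being the one computed with $\alpha_k$ frozen).

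The core of the argument is a chain rule for the Poisson bracket of a function into which a phase-space function has been substituted for a parameter: for any $H=H(x,\alpha_k)$ one has
\begin{gather*}
\big\{H|_{\alpha_k\to\tilde{h}_1},P\big\}=\{H,P\}\big|_{\alpha_k\to\tilde{h}_1}+\frac{\partial H}{\partial\alpha_k}\Big|_{\alpha_k\to\tilde{h}_1}\{\tilde{h}_1,P\}.
\end{gather*}
I would verify this directly from the coordinate expression for $\{\,\cdot\,,P\}$, using $\partial_{q_i}\big(H(x,\tilde{h}_1(x))\big)=\partial_{q_i}H+(\partial_{\alpha_k}H)\,\partial_{q_i}\tilde{h}_1$ together with the analogous identity in the $p_i$, so that all derivatives of $\tilde{h}_1$ collect into the single correction term. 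Taking $H=h_{r+1}$ and inserting $\partial h_{r+1}/\partial\alpha_k=V_{r+1}^{(k)}$ yields the clean difference formula
\begin{gather*}
\bar{h}_{n+r}-\tilde{h}_{n+r}=V_{r+1}^{(k)}\{\tilde{h}_1,P\},\qquad r=1,\ldots,n-1.
\end{gather*}

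It then remains only to kill the correction by invoking $\{\tilde{h}_1,P\}=0$, which is exactly where the hypothesis enters. As established in the discussion preceding the theorem, $Y(V_1^{(k)})=0$ yields $Y(1/V_1^{(k)})=0$ and hence $L_Y\tilde{G}=0$ by (\ref{Gbar}); together with the Killing property $L_YG=0$ of the field underlying $P$ (from the construction in Theorem~\ref{super}) this forces $\{\tilde{h}_1,P\}=0$. Substituting into the difference formula gives $\bar{h}_{n+r}=\tilde{h}_{n+r}$ for every $r$, which is the assertion. The main obstacle I anticipate is the bookkeeping in the chain-rule step: one must check that the frozen-parameter bracket in the first term is genuinely $\tilde{h}_{n+r}$ and that the only surviving correction is the single term proportional to $\{\tilde{h}_1,P\}$; once that identity is secured, the vanishing of $\{\tilde{h}_1,P\}$ follows immediately from the preceding discussion.
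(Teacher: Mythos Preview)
Your proposal is correct and follows essentially the same route as the paper: both arguments reduce the claim to the vanishing of $\{\tilde h_1,P\}$ (established in the discussion preceding the theorem from $Y(V_1^{(k)})=0$), and both exploit the affine dependence $h_{r+1}=H_{r+1}+\alpha_kV_{r+1}^{(k)}$ to identify the two constructions. The only cosmetic difference is that you package the computation as a general chain-rule identity $\{H|_{\alpha_k\to\tilde h_1},P\}=\{H,P\}|_{\alpha_k\to\tilde h_1}+(\partial_{\alpha_k}H)\{\tilde h_1,P\}$, whereas the paper expands $\tilde h_{r+1}$ via (\ref{4.1}) and simplifies term by term to the common expression $\{H_{r+1},P\}+\tilde h_1\{V_{r+1}^{(k)},P\}$; the underlying manipulations are identical.
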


\begin{proof}
On one hand, according to (\ref{4.1}) and due to the fact that $\{\tilde{h}_{1},P\} =0$ we have
\begin{gather*}
\begin{split} &
\bar{h}_{n+r} =\big\{ \tilde{h}_{r+1},P\big\} =\left\{ H_{r+1}-\frac{V_{r+1}^{(k)}}{V_{1}^{(k)}}H_{1}+\tilde{\alpha}\frac{V_{r+1}^{(k)}
}{V_{1}^{(k)}},P\right\} \\
& \hphantom{\bar{h}_{n+r}}{} = \{ H_{r+1},P \} -\frac{H_{1}}{V_{1}^{(k)}}\big\{ V_{r+1}^{(k)},P\big\} +\frac{\tilde{\alpha}}{V_{1}^{(k)}}\big\{ V_{r+1}^{(k)},P\big\} = \{ H_{r+1},P\} +\tilde{h}_{1}\big\{ V_{r+1}^{(k)},P\big\}.
\end{split}
\end{gather*}
On the other hand, due to
\begin{gather*}
\tilde{h}_{n+r}=\left. h_{n+r}\right\vert _{\alpha=\widetilde{h}_{1}(\widetilde{\alpha})}= \{ h_{r+1},P \} \vert
_{\alpha=\widetilde{h}_{1}(\widetilde{\alpha})}= \{ H_{r+1},P \}
+\alpha\big\{ V_{r+1}^{(k)},P\big\} \big\vert _{\alpha =\widetilde{h}_{1}(\widetilde{\alpha})},
\end{gather*}
which yields the same result.
\end{proof}

Thus, if $Y(V_{1}^{(k)})=0$, the diagram below commutes%
\begin{gather*}
\begin{matrix}
\left( h_{1},\ldots,h_{n}\right) & \overset{P}{\longrightarrow} & \left(
h_{1},\ldots,h_{2n-1}\right) \ \text{with} \ h_{n+r}=\left\{ h_{r+1},P\right\}
\\
\mid & & \mid\\
\text{St\"ackel transform} & & \text{St\"ackel transform}\\
\downarrow & & \downarrow\\
\big( \tilde{h}_{1},\ldots,\tilde{h}_{n}\big) & \overset{P}
{\longrightarrow} & \big( \tilde{h}_{1},\ldots,\tilde{h}_{2n-1}\big) \ \text{with} \ \tilde{h}_{n+r}=\big\{ \tilde{h}_{r+1},P\big\}.
\end{matrix}
\end{gather*}

\begin{Example}\label{4}Let us apply the relations (\ref{basia}), (\ref{4.1}) to perform the St\"ackel transform on the system from Example~\ref{2}. To keep the formulas simple, we assume that all the~$\alpha_{s}$ in~(\ref{laurent}) are zero except the transformation parameter~$\alpha_{k}$. Thus, we consider again the system given by the separation relations
\begin{gather*}
\alpha_{k}\lambda_{i}^{k}+h_{1}\lambda_{i}^{2}+h_{2}\lambda_{i}+h_{3}=\frac
{1}{2}\lambda_{i}\mu_{i}^{2}, \qquad i=1,2,3
\end{gather*}
with $k=-1$, $3$ or $4$, respectively. Applying St\"ackel transform to the resulting Hamiltonians (\ref{ham2})--(\ref{jola1}) we obtain a maximally superintegrable system with the separation relations of the form:
\begin{gather}
\tilde{h}_{1}\lambda_{i}^{k}+\tilde{\alpha}\lambda_{i}^{2}+\tilde{h}_{2}\lambda_{i}+\tilde{h}_{3}=\frac{1}{2}\lambda_{i}\mu_{i}^{2}, \qquad i=1,2,3.\label{sep1}
\end{gather}
Again we perform our calculations in the $(r,s)$-variables (\ref{r}), (\ref{s}). Explicitly, we obtain for $k=-1$
\begin{gather}
\tilde{h}_{1} =\frac{1}{8}r_{3}^{2}s_{3}^{2}+\frac{1}{4}r_{3}^{2}s_{1}s_{2}-\frac{1}{4}\tilde{\alpha}r_{3}^{2},\nonumber\\
\tilde{h}_{2} =\frac{1}{2}s_{1}^{2}-\frac{1}{2}r_{2}s_{2}^{2}-\frac{1}{2}r_{1}s_{1}s_{2}-\frac{1}{2}r_{3}s_{2}s_{3}+\tilde{\alpha}r_{1},\nonumber\\
\tilde{h}_{3} =\frac{1}{8}r_{3}^{2}s_{2}^{2}-\left( \frac{1}{4}r_{1}^{2}+r_{2}\right) s_{1}s_{2}-\frac{1}{2}r_{3}s_{1}s_{3}-\frac{1}{4}r_{1}
r_{3}s_{2}s_{3}+\frac{1}{4}\tilde{\alpha}\big( r_{1}^{2}+4r_{2}\big),\nonumber\\
\tilde{h}_{4} =-\frac{1}{2}s_{2}^{2}, \qquad \tilde{h}_{5} =-s_{1}s_{2}+\tilde{\alpha},\label{s1}
\end{gather}
for $k=3$
\begin{gather*}
\tilde{h}_{1} =-\frac{1}{r_{1}}s_{1}s_{2}-\frac{1}{2}\frac{1}{r_{1}}s_{3}^{2}+\tilde{\alpha}\frac{1}{r_{1}},\\
\tilde{h}_{2} =\frac{1}{2}s_{1}^{2}+\frac{1}{4}\frac{r_{1}^{2}-4r_{2}}{r_{1}}s_{1}s_{2}-\frac{1}{2}r_{2}s_{2}^{2}-\frac{1}{2}r_{3}s_{2}s_{3}
+\frac{1}{8}\frac{3r_{1}^{2}-4r_{2}}{r_{1}}s_{3}^{2}+\frac{1}{4}\tilde{\alpha }\frac{r_{1}^{2}+4r_{2}}{r_{1}},\\
\tilde{h}_{3} =\frac{1}{4}\frac{r_{3}^{2}}{r_{1}}s_{1}s_{2}-\frac{1}{2}r_{3}s_{1}s_{3}+\frac{1}{8}r_{3}^{2}s_{2}^{2}-\frac{1}{4}r_{1}r_{3}
s_{2}s_{3}+\frac{1}{8}\frac{r_{1}^{3}+4r_{1}r_{2}+r_{3}^{2}}{r_{1}}s_{3}^{2}-\frac{1}{4}\tilde{\alpha}\frac{r_{3}^{2}}{r_{1}},\\
\tilde{h}_{4} =-\frac{1}{r_{1}}s_{1}s_{2}-\frac{1}{2}s_{2}^{2}-\frac{1}{2}\frac{1}{r_{1}}s_{3}^{2}+\tilde{\alpha}\frac{1}{r_{1}},\qquad
\tilde{h}_{5} =\frac{1}{2}s_{3}^{2},
\end{gather*}
and for $k=4$
\begin{gather}
\tilde{h}_{1} =-\frac{1}{r_{2}-\frac{3}{4}r_{1}^{2}}s_{1}s_{2}-\frac{1}{2}\frac{1}{r_{2}-\frac{3}{4}r_{1}^{2}}s_{3}^{2}+\tilde{\alpha}\frac{1}
{r_{2}-\frac{3}{4}r_{1}^{2}},\nonumber\\
\tilde{h}_{2} =\frac{1}{2}s_{1}^{2}-\frac{1}{2}r_{2}s_{2}^{2}-\frac{1}{8}\frac{2r_{1}^{3}-8r_{1}r_{2}-r_{3}^{2}}{r_{2}-\frac{3}{4}r_{1}^{2}}
s_{3}^{2}-\frac{1}{8}\frac{r_{1}^{3}-12r_{1}r_{2}-2r_{3}^{2}}{r_{2}-\frac{3}{4}r_{1}^{2}}s_{1}s_{2}-\frac{1}{2}r_{3}s_{2}s_{3}\nonumber\\
\hphantom{\tilde{h}_{2} =}{} -\tilde{\alpha}\frac{r_{1}r_{2}+\frac{1}{4}r_{1}^{3}+\frac{1}{4}r_{3}^{2}}{r_{2}-\frac{3}{4}r_{1}^{2}},\nonumber\\
\tilde{h}_{3} =\frac{1}{8}r_{3}^{2}s_{2}^{2}-\frac{1}{32}\frac{3r_{1}^{4}+8r_{1}^{2}r_{2}+4r_{1}r_{3}^{2}+16r_{2}^{2}}{r_{2}-\frac{3}{4}r_{1}^{2}
}s_{3}^{2}+\frac{1}{4}\frac{r_{1}r_{3}^{2}}{r_{2}-\frac{3}{4}r_{1}^{2}}s_{1}s_{2}-\frac{1}{2}r_{3}s_{1}s_{3}-\frac{1}{4}r_{1}r_{3}s_{2}s_{3}\nonumber\\
\hphantom{\tilde{h}_{3} =}{} +\frac{1}{4}\tilde{\alpha}\frac{r_{1}r_{3}^{2}}{r_{2}-\frac{3}{4}r_{1}^{2}},\nonumber\\
\tilde{h}_{4} =-\frac{1}{2}s_{2}^{2}+\frac{1}{2}\frac{r_{1}}{r_{2}-\frac{3}{4}r_{1}^{2}}s_{3}^{2}+\frac{r_{1}}{r_{2}-\frac{3}{4}r_{1}^{2}}s_{1}
s_{2}-\tilde{\alpha}\frac{r_{1}}{r_{2}-\frac{3}{4}r_{1}^{2}}, \qquad \tilde{h}_{5} =\frac{1}{2}s_{3}^{2}.\label{s3}
\end{gather}
According to part $(i)$ of Theorem~\ref{wazny2} the metrics of $\tilde{h}_{1} $ are of constant curvature, f\/lat and conformally f\/lat, respectively.
\end{Example}

\section{Quantization of maximally superintegrable St\"ackel systems}\label{section6}

This section is devoted to separable quantizations of St\"ackel systems\ that were considered in the classical setting in the previous sections. Let us consider, as in the classical case, an $n$-dimensional Riemannian space $Q $ equipped with a matric tensor $g$ and the quadratic in momenta Hamiltonian on the cotangent bundle $T^{\ast}Q$:
\begin{gather*}
h=\frac{1}{2}\sum_{i,j=1}^{n}p_{i}A^{ij}(x)p_{j}+U(x).
\end{gather*}
By its \emph{minimal quantization}~\cite{Blaszak-inpress} we mean the following self-adjoint operator
\begin{gather}
\widehat{h}=-\frac{1}{2}\hslash^{2}\sum_{i,j=1}^{n}\nabla_{i}A^{ij}%
(x)\nabla_{j}+U(x)=-\frac{1}{2}\hslash^{2}\sum_{i,j=1}^{n}\frac{1}%
{\sqrt{\left\vert g\right\vert }}\partial_{i}\sqrt{\left\vert g\right\vert
}A^{ij}(x)\partial_{j}+U(x)\label{hq}
\end{gather}
(both expressions on the right hand side of (\ref{hq}) are equivalent) acting in the Hilbert space
\begin{gather*}
\mathcal{H}=L^{2}(Q,d\mu), \qquad d\mu= \vert g \vert^{1/2}dx, \qquad \vert g \vert = \vert \det g \vert,
\end{gather*}
where $\nabla$ is the Levi-Civita connection of the metric $g$. Note that a~priori there is no relation between the tensor~$A$ and the metric~$g$. Let us now consider an arbitrary St\"ackel system of the form~(\ref{Benham}) coming from the separation relations~(\ref{sr}). Applying the procedure of minimal quantization to this system will in general yield a non-integrable and non-separable quantum system. In order to preserve integrability and separability we have to carefully choose the metric~$g$. To do this, we will use the following theorem, proved in~\cite{Blaszak-inpress}.

\begin{Theorem}\label{wazny3}Suppose that $h_{j}$ are Hamiltonian functions \eqref{Benham}, defined by separation relations~\eqref{sr}. Suppose also that $\theta$ is an arbitrary function of one variable. Applying to~$h_{j}$ the procedure of minimal quantization~\eqref{hq} with the metric tensor
\begin{gather}
g=\varphi^{\frac{2}{n}}g_{\theta}, \label{metryka}
\end{gather}
where $g_{\theta}=G_{\theta}^{-1}$ with $G_{\theta}$ given by
\begin{gather}
G_{\theta}=\operatorname{diag}\left( \frac{\theta(\lambda_{1})}{\Delta_{1}},\ldots,\frac{\theta(\lambda_{n})}{\Delta_{n}}\right), \label{theta}
\end{gather}
and with $\varphi$ being a particular function of $\lambda_{1},\dots,\lambda _{n}$, uniquely defined by~\eqref{sr} $($see formula~{\rm (27)} in~{\rm \cite{Blaszak-inpress}} for details$)$, we obtain a quantum integrable and separable system. More precisely, we obtain $n$ operators $\widehat{h}_{i}$ of the form \eqref{hq} such that~$(i)$ $[ \widehat{h}_{i},\widehat{h}_{j}] =0$ for all $i$, $j$ and~$(ii)$ eigenvalue problems for all~$\widehat{h}_{i}$
\begin{gather*}
\widehat{h}_{i}\Psi=\varepsilon_{i}\Psi, \qquad i=1,\ldots,n
\end{gather*}
have for each choice of eigenvalues $\varepsilon_{i}$ of $\widehat{h}_{i}$ the common multiplicatively separable eigenfunction $\Psi(\lambda_{1},\dots,\lambda_{n})=\prod\limits_{i=1}^{n}\psi(\lambda_{i})$ with $\psi$ satisfying the following ODE $($quantum separation relation$)$
\begin{gather}
\big(\varepsilon_{1}\lambda^{\gamma_{1}}+\varepsilon_{2}\lambda^{\gamma_{2} }+\dotsb+\varepsilon_{n}\big)\psi(\lambda) \nonumber\\
\qquad{} =-\frac{1}{2}\hbar^{2}f(\lambda)\left[\frac{d^{2}\psi(\lambda)}{d\lambda^{2}}+\left( \frac{f^{\prime}(\lambda
)}{f(\lambda)}-\frac{1}{2}\frac{\theta^{\prime}(\lambda)}{\theta(\lambda
)}\right) \frac{d\psi(\lambda)}{d\lambda}\right] +\sigma(\lambda)\psi(\lambda). \label{QS}
\end{gather}
\end{Theorem}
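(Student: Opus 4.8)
The plan is to verify the two assertions---commutativity~(i) and separability~(ii)---directly, by reducing the action of each minimally quantized operator on a multiplicatively separable ansatz to the single one-variable ODE~\eqref{QS}; I work throughout in the separation coordinates $\lambda$. First I would record the shape of the operators $\widehat{h}_k$ in these coordinates. By the very construction of the $h_k$ from the separation relations~\eqref{sr}, each contravariant tensor $A_k$ is diagonal in $\lambda$, say $A_k=\operatorname{diag}\big(A_k^{11},\dots,A_k^{nn}\big)$, and together with the potentials they obey the purely algebraic \emph{St\"ackel identity}
\begin{gather*}
\sum_{k=1}^{n}\lambda_i^{\gamma_k}A_k^{jj}=f(\lambda_i)\,\delta_{ij},\qquad \sum_{k=1}^{n}\lambda_i^{\gamma_k}U_k=-\sigma(\lambda_i),
\end{gather*}
which is nothing but~\eqref{sr} read off at the level of tensors and potentials. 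Consequently each $\widehat{h}_k$ in~\eqref{hq} is a sum of one-variable second-order operators,
\begin{gather*}
\widehat{h}_k=-\frac{1}{2}\hbar^{2}\sum_{i=1}^{n}\frac{1}{\sqrt{|g|}}\,\partial_i\big(\sqrt{|g|}\,A_k^{ii}\,\partial_i\big)+U_k .
\end{gather*}

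Next I would compute the density $\sqrt{|g|}$ for the prescribed metric $g=\varphi^{2/n}g_\theta$. Since $g_\theta=\operatorname{diag}(\Delta_i/\theta(\lambda_i))$ is diagonal, $|g_\theta|^{1/2}$ factors through the Vandermonde product $\prod_{i<j}(\lambda_i-\lambda_j)$ hidden in $\prod_i\Delta_i$, while the conformal factor contributes $\varphi$. The point of the specific choice of $\varphi$ (formula~(27) of~\cite{Blaszak-inpress}) is that, when this density is inserted into the connection term and the operator is applied to $\Psi=\prod_l\psi(\lambda_l)$, the first-order derivative acting on the $i$-th factor collapses to exactly the coefficient $\frac{f'(\lambda_i)}{f(\lambda_i)}-\frac{1}{2}\frac{\theta'(\lambda_i)}{\theta(\lambda_i)}$ appearing in~\eqref{QS}, with no residual terms coupling distinct coordinates.

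Granting this, I would apply $\widehat{h}_k$ to the product $\Psi=\prod_l\psi(\lambda_l)$. Using the diagonal form above and the collapse of the first-order terms, $\widehat{h}_k\Psi/\Psi$ reorganizes into
\begin{gather*}
\frac{\widehat{h}_k\Psi}{\Psi}=\sum_{i=1}^{n}\big(B^{-1}\big)_{ki}\,\frac{\widehat{S}_i\,\psi(\lambda_i)}{\psi(\lambda_i)},\qquad B_{ik}=\lambda_i^{\gamma_k},
\end{gather*}
where $\widehat{S}_i$ is precisely the one-variable operator on the right-hand side of~\eqref{QS} in the variable $\lambda_i$; the St\"ackel identity is what guarantees that the $B^{-1}$-combination of the geodesic and potential pieces of the $\widehat{S}_i$ reproduces the tensor $A_k$ and the potential $U_k$ of $\widehat{h}_k$. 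If now $\psi$ solves~\eqref{QS} for constants $\varepsilon_1,\dots,\varepsilon_n$, then $\widehat{S}_i\,\psi(\lambda_i)=\big(\sum_m\varepsilon_m\lambda_i^{\gamma_m}\big)\psi(\lambda_i)=\big(\sum_m\varepsilon_mB_{im}\big)\psi(\lambda_i)$, so substituting and using $\sum_i(B^{-1})_{ki}B_{im}=\delta_{km}$ gives $\widehat{h}_k\Psi=\varepsilon_k\Psi$, which is~(ii). For~(i) I would read the same computation as the operator identities $\sum_k B_{ik}\widehat{h}_k=\widehat{S}_i$ on separable functions: the $\widehat{S}_i$ act in distinct single variables and hence commute, and since each row of $B$ depends on one coordinate only, this commutativity transfers back to the $\widehat{h}_k$ (equivalently, the $\widehat{h}_k$ share the complete family of separable eigenfunctions $\Psi$).

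The main obstacle is the middle step: showing that the Levi-Civita connection of the conformally rescaled metric $g=\varphi^{2/n}g_\theta$ produces, after acting on the product ansatz, exactly the first-order coefficient of~\eqref{QS} and no cross-terms between different $\lambda_i$. This is where the particular density $\varphi$ must be used, and it is precisely the requirement that distinguishes a \emph{separable} minimal quantization from a generic non-separable one; the remaining steps are either the classical St\"ackel algebra encoded in the identity above or bookkeeping with the Vandermonde determinant $\prod_{i<j}(\lambda_i-\lambda_j)$ entering $\sqrt{|g|}$.
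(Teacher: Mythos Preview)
The paper does not prove this theorem at all: it is quoted verbatim from~\cite{Blaszak-inpress} and used as a tool. So there is no ``paper's own proof'' to compare against; what can be assessed is whether your sketch stands on its own.

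Your outline for part~(ii) is the right shape: the St\"ackel identities you record are correct, and once one knows that the first-order coefficient produced by $\sqrt{|g|}$ collapses to $f'/f-\tfrac12\,\theta'/\theta$ with no cross-terms, the rest is linear algebra with the St\"ackel matrix $B$. But you yourself flag this collapse as ``the main obstacle'' and do not carry it out. That is not a minor bookkeeping step: it is the entire content of the theorem, and it is exactly where the specific function $\varphi$ (formula~(27) of~\cite{Blaszak-inpress}) enters. Without doing that computation---showing that $\partial_i\ln\sqrt{|g|}$ combines with $\partial_i A_k^{ii}$ so that the $\lambda_j$-dependence for $j\neq i$ cancels---you have restated the claim rather than proved it.

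Your argument for~(i) has a genuine gap. The relation $\sum_k B_{ik}\widehat{h}_k=\widehat{S}_i$ does \emph{not} hold as an operator identity: $B_{ik}=\lambda_i^{\gamma_k}$ is a multiplication operator that does not commute with the $\partial_i$ inside $\widehat{h}_k$, so moving it past the derivative to invoke $\sum_k\lambda_i^{\gamma_k}A_k^{jj}=f(\lambda_i)\delta_{ij}$ produces extra terms. You only established the relation on multiplicatively separable functions, and from that alone you cannot conclude $[\widehat{h}_k,\widehat{h}_l]=0$ as operators; the parenthetical ``equivalently, the $\widehat{h}_k$ share the complete family of separable eigenfunctions'' presupposes a completeness statement you have not proved. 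The route taken in~\cite{Blaszak-inpress} is instead a direct verification that the quantum correction terms arising in $[\widehat{h}_k,\widehat{h}_l]$ vanish precisely for metrics of the form~\eqref{metryka}, which is a Robertson-type condition; that computation is independent of any spectral completeness and is what actually pins down the admissible~$\varphi$.
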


\begin{Remark}\label{jakkwantowac}For St\"ackel systems def\/ined by \eqref{BenSC}, when $(\gamma_{1},\dots,\gamma_{n})=(n-1,\dots,0)$ in~(\ref{sr}), we have $\varphi=1$ and the most natural choice in~(\ref{theta}) is to put $\theta=f$ which yields the metric for quantization
\begin{gather}
G=G_{f}=A_{1}.\label{MQB}
\end{gather}
On the other hand, for St\"ackel systems def\/ined by~(\ref{4.2}) we have $\varphi=-V_{1}^{(k)}$ (as it follows from~(\ref{basia}) and the formula~(27) in~\cite{Blaszak-inpress}) and again the simplest choice in~(\ref{theta}) is to put $\theta=f$ which yields according to~(\ref{metryka}) and~(\ref{Gbar}) the metric for quantization
\begin{gather}
G=\varphi^{-\frac{2}{n}}G_{f}=\varphi^{1-\frac{2}{n}}\tilde{A}_{1}.\label{MQ}
\end{gather}
\end{Remark}

For the choice (\ref{MQB}) and (\ref{MQ}) the quantum separation equation~(\ref{QS}) reduce to
\begin{gather}
\big(\varepsilon_{1}\lambda^{\gamma_{1}}+\varepsilon_{2}\lambda^{\gamma_{2}
}+\dotsb+\varepsilon_{n}\big)\psi(\lambda)=-\frac{1}{2}\hbar^{2}f(\lambda)\left[
\frac{d^{2}\psi(\lambda)}{d\lambda^{2}}+\frac{1}{2}\frac{f^{\prime}(\lambda
)}{f(\lambda)}\frac{d\psi(\lambda)}{d\lambda}\right] +\sigma(\lambda)\psi(\lambda),\label{sepe}
\end{gather}
where $( \gamma_{1},\ldots, \gamma _{n}) =(n-1,n-2,\ldots ,0)$ in the f\/irst case~\eqref{MQB} and $( \gamma _{1},\ldots, \gamma _{n}) =(k,n-2,n-3$, $\ldots ,0)$ in the second case~\eqref{MQ}.

Let us now pass to the issue of quantum superintegrability of considered St\"ackel systems. We formulate now a quantum analogue of Lemma~\ref{lemacik}.

\begin{Lemma}\label{lemacik2} Suppose that $\widehat{h}$ is given by \eqref{hq} and that $Y=\sum\limits_{i=1}^{n} y^{i}(x)\nabla_{i}$ is a vector field on the Riemannian manifold~$Q$ with a~metric~$g$. Then
\begin{gather*}
\big[ \widehat{h},Y\big] =\frac{1}{2}\hslash^{2}\sum_{i,j=1}^{n}\nabla_{i}( L_{Y}A) ^{ij}\nabla_{j}+\frac{1}{2}\hslash^{2}
\sum_{i,j,k=1}^{n}A^{ij}\big( \nabla_{j}\nabla_{k}y^{k}\big) \nabla_{i}-Y(U).
\end{gather*}
\end{Lemma}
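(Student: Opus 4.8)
The plan is to split $\widehat{h}=T+U$, where $T=-\tfrac12\hslash^{2}\tfrac{1}{\sqrt{|g|}}\partial_{i}\sqrt{|g|}\,A^{ij}\partial_{j}$ is the second-order part of \eqref{hq}, and treat the two summands separately. The potential part is immediate: since $U$ acts by multiplication and $Y$ is a derivation, $[U,Y]\psi=U(Y\psi)-Y(U\psi)=-(YU)\psi$, which already produces the term $-Y(U)$. For $T$ I would avoid a brute-force expansion of the Christoffel symbols and instead exploit that $T$ is a \emph{natural} operator built from the contravariant symmetric tensor $A$ and the weight-one volume density $\rho:=\sqrt{|g|}$: indeed $\rho A^{ij}\partial_{j}\psi$ is a vector density whose ordinary divergence $\partial_{i}(\,\cdot\,)$ is coordinate-invariant, so $T[\psi]$ is a genuine scalar. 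Consequently conjugation of $T$ by any diffeomorphism replaces $A$ and $\rho$ by their pullbacks.

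Concretely, I would let $\phi_{t}$ be the local flow of $Y$, so that $\frac{d}{dt}\big|_{0}\phi_{t}^{\ast}=Y$ on functions, and use
\[
\big[\widehat{h},Y\big]=\frac{d}{dt}\Big|_{t=0}\big(\phi_{-t}^{\ast}\,\widehat{h}\,\phi_{t}^{\ast}\big).
\]
By the naturality just noted, the conjugated operator is again of divergence form but with $A,\rho,U$ replaced by $\phi_{-t}^{\ast}A,\phi_{-t}^{\ast}\rho,\phi_{-t}^{\ast}U$, so differentiating at $t=0$ amounts to inserting the variations $-L_{Y}A$, $-L_{Y}\rho$, $-Y(U)$ into the coefficients. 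Setting $\theta:=\nabla_{k}y^{k}$ and using the weight-one rule $L_{Y}\rho=\rho\,\theta$, a Leibniz expansion of $\frac{d}{dt}\big|_{0}\frac{1}{\rho_{t}}\partial_{i}\big(\rho_{t}A_{t}^{ij}\partial_{j}\psi\big)$ yields three terms: the one coming from $\dot A=-L_{Y}A$ equals $-\nabla_{i}(L_{Y}A)^{ij}\nabla_{j}\psi$, while the two coming from $\dot\rho=-\rho\theta$ have their second-order pieces cancel and collapse to the single first-order term $-A^{ij}(\partial_{i}\theta)\partial_{j}\psi$. Multiplying by $-\tfrac12\hslash^{2}$, and using $\partial_{i}\theta=\nabla_{i}\theta$ together with $A=A^{T}$ to relabel the summation indices, then produces exactly
\[
\tfrac12\hslash^{2}\nabla_{i}(L_{Y}A)^{ij}\nabla_{j}+\tfrac12\hslash^{2}A^{ij}\big(\nabla_{j}\nabla_{k}y^{k}\big)\nabla_{i},
\]
and adding the potential contribution $-Y(U)$ finishes the proof.

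The genuinely delicate points, to be executed with care, are two. First, the sign bookkeeping: conjugation by $\phi_{-t}^{\ast}$ differentiates to $-L_{Y}$, and this is precisely what flips the sign of the potential term relative to the classical Lemma~\ref{lemacik}. Second, and this is the heart of the matter, one must verify that the two $\dot\rho$-contributions cancel at second order and recombine into the single first-order term $A^{ij}(\nabla_{j}\nabla_{k}y^{k})\nabla_{i}$; this is the only place where the transformation law $L_{Y}\sqrt{|g|}=\sqrt{|g|}\,\nabla_{k}y^{k}$ enters, and it is the quantum correction that distinguishes this lemma from its classical counterpart. As an independent check, a direct computation in coordinates — expanding the covariant derivatives into Christoffel symbols and using $\Gamma^{k}_{ki}=\partial_{i}\ln\sqrt{|g|}$ — gives the same result, at the cost of considerably more bookkeeping.
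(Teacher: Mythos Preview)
Your argument is correct. The paper itself offers no proof beyond the single sentence ``One proves this lemma by a direct computation,'' so there is nothing to compare at the level of detail; but at the level of strategy your approach is genuinely different from what the paper indicates.

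The paper's route is the brute-force expansion you mention at the end as an ``independent check'': write everything in coordinates, expand $\nabla_i A^{ij}\nabla_j$ via Christoffel symbols, commute with $y^k\partial_k$, and reorganise. Your main argument instead exploits the diffeomorphism naturality of the divergence-form operator $T_{A,\rho}[\psi]=\rho^{-1}\partial_i(\rho A^{ij}\partial_j\psi)$ to identify $[\,T,Y\,]$ with the $t$-derivative of $T_{\phi_{-t}^{\ast}A,\,\phi_{-t}^{\ast}\rho}$ at $t=0$. This reduces the computation to a Leibniz rule on the coefficients $(A,\rho)$, and the key cancellation between the two $\dot\rho$-terms is then a two-line calculation that isolates the first-order quantum correction $A^{ij}(\nabla_j\nabla_k y^{k})\nabla_i$. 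What your approach buys is conceptual clarity: it makes transparent \emph{why} the answer has exactly the shape it does (variation of $A$ gives the $L_YA$ piece, variation of the volume density gives the divergence-of-$Y$ piece, variation of $U$ gives $-Y(U)$), and it explains the sign flip relative to the classical Lemma~\ref{lemacik} as a consequence of conjugating by $\phi_{-t}^{\ast}$ rather than taking a Poisson bracket. The direct coordinate computation, by contrast, is self-contained and requires no appeal to naturality, but obscures this structure behind Christoffel-symbol bookkeeping.
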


One proves this lemma by a direct computation. Thus, a suf\/f\/icient condition for $[ \widehat{h},Y] =c$ is satisf\/ied when~$Y$ is a Killing vector for both $A$ and $g$ and if moreover $U$ is constant along~$Y$, that is when
\begin{gather}
L_{Y}A=0, \qquad L_{Y}g=0, \qquad Y(U)=c\label{wk}
\end{gather}
(note that $L_{Y}g=0$ implies $\sum\limits_{i=1}^{n}\nabla_{k}y^{k}=0$).

\begin{Corollary}\label{qint}Suppose we have a quantum integrable system on the configuration space $Q$, that is a set of $n$ commuting and algebraically independent operators $\widehat{h}_{1},\ldots,\widehat{h}_{n}$ of the form~\eqref{hq} acting in the Hilbert space $L^{2}(Q,\vert g\vert ^{1/2}dx)$ where $g$ is some metric on~$Q$. Suppose also that a~vector field~$Y$ satisfies~\eqref{wk} with~$A_{1}$ and~$U_{1}$ instead of~$A$ and~$U$ $($so that $[\widehat{h}_{1},Y] =c)$. Then, analogously to the classical case, the operators
\begin{gather}
\widehat{h}_{n+r}= \big[ \widehat{h}_{r+1},Y \big] =\frac{1}{2}\hslash ^{2}\sum_{i,j=1}^{n}\nabla_{i} ( L_{Y}A_{r+1} ) ^{ij}\nabla _{j}-Y(U_{r+1}), \qquad r=1,\ldots,n-1\label{op}
\end{gather}
satisfy $[ \widehat{h}_{n+r},\widehat{h}_{1}] =0$ and the system $\widehat{h}_{1},\ldots,\widehat{h}_{2n-1}$ is algebraically independent; that
is we obtain a quantum separable and quantum superintegrable system.
\end{Corollary}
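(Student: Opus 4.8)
The plan is to mirror the classical argument used to establish Theorem~\ref{super}, systematically replacing the Poisson bracket by the operator commutator and Lemma~\ref{lemacik} by its quantum counterpart Lemma~\ref{lemacik2}. The separability is inherited from the given integrable system (guaranteed by Theorem~\ref{wazny3} for the appropriate choice of metric), so it suffices to produce the $n-1$ extra integrals, verify that they commute with $\widehat{h}_{1}$, and check algebraic independence. First I would apply Lemma~\ref{lemacik2} with $A=A_{r+1}$ and $U=U_{r+1}$ to evaluate $\big[\widehat{h}_{r+1},Y\big]$. The hypothesis \eqref{wk} includes $L_{Y}g=0$, i.e.\ $Y$ is a Killing vector for the metric~$g$; the divergence of a Killing field vanishes, $\nabla_{k}y^{k}=0$, so the middle term $\tfrac12\hslash^{2}\sum A^{ij}\big(\nabla_{j}\nabla_{k}y^{k}\big)\nabla_{i}$ in Lemma~\ref{lemacik2} drops out identically, leaving exactly the stated form \eqref{op} for $\widehat{h}_{n+r}$. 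Applying Lemma~\ref{lemacik2} once more with $A=A_{1}$, $U=U_{1}$ and using the full set of conditions \eqref{wk} shows that $\big[\widehat{h}_{1},Y\big]=c$ is a scalar (a multiplication-by-constant operator).

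The commutation statement $\big[\widehat{h}_{n+r},\widehat{h}_{1}\big]=0$ then follows from the Jacobi identity in the associative algebra of operators, in complete parallel with the classical computation in Section~\ref{section3}. Writing
\begin{gather*}
\big[\big[\widehat{h}_{r+1},Y\big],\widehat{h}_{1}\big]
=-\big[\big[Y,\widehat{h}_{1}\big],\widehat{h}_{r+1}\big]
-\big[\big[\widehat{h}_{1},\widehat{h}_{r+1}\big],Y\big],
\end{gather*}
the left-hand side is $\big[\widehat{h}_{n+r},\widehat{h}_{1}\big]$, while on the right the first term vanishes because $\big[Y,\widehat{h}_{1}\big]=-c$ is a scalar and therefore commutes with $\widehat{h}_{r+1}$, and the second term vanishes because $\widehat{h}_{1}$ and $\widehat{h}_{r+1}$ commute by the assumed quantum integrability of $\widehat{h}_{1},\dots,\widehat{h}_{n}$. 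Hence $\big[\widehat{h}_{n+r},\widehat{h}_{1}\big]=0$ for every $r=1,\dots,n-1$, so that all of $\widehat{h}_{n+1},\dots,\widehat{h}_{2n-1}$ commute with $\widehat{h}_{1}$, as claimed.

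It remains to establish algebraic independence of $\widehat{h}_{1},\dots,\widehat{h}_{2n-1}$, and I expect this to be the only nonroutine point. The plan is to pass to principal (top-order) symbols: by the minimal-quantization formula \eqref{hq}, the leading symbol of $\widehat{h}_{i}$ is the geodesic part $\tfrac12 A_{i}^{jk}\xi_{j}\xi_{k}$ of the classical Hamiltonian~$h_{i}$, and the leading symbol of $\widehat{h}_{n+r}$ in \eqref{op} is $\tfrac12\big(L_{Y}A_{r+1}\big)^{jk}\xi_{j}\xi_{k}$, which coincides with the classical extra integral $E_{n+r}$. Any algebraic relation among the operators would, upon extracting the highest-order part, descend to a functional relation among these symbols, contradicting the functional independence of $h_{1},\dots,h_{2n-1}$ guaranteed by Theorem~\ref{super}. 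The delicate step is to carry out this symbol argument carefully, ensuring that the quantization does not produce order-lowering cancellations that would spoil the symbol/function correspondence, whereas the commutation part is immediate from the Jacobi identity.
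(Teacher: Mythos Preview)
Your proposal is correct and follows exactly the approach implied by the paper, which does not spell out a proof of the corollary but simply states it ``analogously to the classical case'' after Lemma~\ref{lemacik2} and conditions~\eqref{wk}; the Jacobi-identity argument you give is precisely the quantum analogue of the classical computation preceding Lemma~\ref{lemacik} in Section~\ref{section3}. Your treatment of algebraic independence via principal symbols is in fact more careful than the paper's, which asserts independence without further comment.
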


We can now apply this corollary to construct quantum superintegrable counterparts of classical systems considered in previous sections. According
to Remark~\ref{jakkwantowac}, for the systems generated by the separation relations~(\ref{BenSC}) the most natural choice of the metric $g$ is to take $G=A_{1}$ as in~(\ref{MQB}). Then, by construction, $ [ \widehat{h}_{i},\widehat{h}_{j} ] =0$ for $i,j=1,\ldots,n$ while the remaining
opera\-tors~$\widehat{h}_{n+r}$ are constructed by the formula~(\ref{op}) and are~-- up to a sign~-- identical with minimal quantization (in the metric~$G$) of the extra integrals~$h_{n+r}$ obtained in~(\ref{struk}).

\begin{Example} Consider again separation relations~(\ref{sep}) from Example~\ref{2}, so that $f(\lambda)=\lambda$ and $\sigma=\alpha_{-1}\lambda^{-1} +\alpha_{3}\lambda ^{3}+\alpha_{4}\lambda^{4}$. Performing the minimal quantization of the Hamiltonians~(\ref{ham2}) in the metric $G=A_{1}$, i.e., given by~(\ref{Go}), we obtain, in the f\/lat $r$-coordinates~(\ref{r})
\begin{gather*}
\widehat{h}_{1} =-\frac{1}{2}\hslash^{2}\left( \partial_{1}\partial
_{2}+\frac{1}{2}\partial_{3}^{2}\right) +\alpha_{-1}V_{1}^{(-1)}(r)+\alpha_{3}V_{1}^{(3)}(r)+\alpha_{4}V_{1}^{(4)}(r),\\
\widehat{h}_{2} =-\frac{1}{4}\hslash^{2}\left( \partial_{1}^{2}-\partial_{2}r_{2}\partial_{2}+r_{1}\partial_{3}\partial_{3} +\frac{1}{2}\partial_{1}r_{1}\partial_{2}+\frac{1}{2}r_{1}\partial_{2}\partial
_{1}-r_{3}\frac{1}{2}\partial_{2}\partial_{3}-\frac{1}{2}\partial_{3}r_{3}\partial_{2}\right) \\
\hphantom{\widehat{h}_{2} =}{} +\alpha_{-1}V_{2}^{(-1)}(r)+\alpha_{3}V_{2}^{(3)}(r)+\alpha_{4}V_{2}^{(4)}(r),\\
\widehat{h}_{3} =-\frac{1}{8}\hslash^{2}\left( \frac{1}{2}r_{3}^{2}\partial_{2}^{2}+\left( 2r_{2}+\frac{1}{2}r_{1}^{2}\right) \partial
_{3}^{2}-r_{3}\partial_{1}\partial_{3}-\partial_{3}r_{3}\partial_{1}-\frac
{1}{2}r_{1}r_{3}\partial_{2}\partial_{3}-\frac{1}{2}r_{1}\partial_{3}r_{3}\partial_{2}\right) \\
\hphantom{\widehat{h}_{3} =}{} +\alpha_{-1}V_{3}^{(-1)}(r)+\alpha_{3}V_{3}^{(3)}(r)+\alpha_{4}V_{3}^{(4)}(r),
\end{gather*}
where $\partial_{i}=\partial/\partial r_{i}$ and $V_{i}^{(k)}$ are given by (\ref{pot1})--(\ref{pot3}). The respective separation equation, according to~(\ref{sep}) and~(\ref{sepe}), is of the form
\begin{gather*}
\big(\alpha_{-1}\lambda^{-1}+\alpha_{3}\lambda^{3}+\alpha_{4}\lambda^{4} +\varepsilon_{1}\lambda^{2}+\varepsilon_{2}\lambda+\varepsilon_{3}\big)\psi(\lambda) =-\frac{1}{2}\hbar^{2}\left[ \lambda\frac{d^{2}\psi(\lambda
)}{d\lambda^{2}}+\frac{1}{2}\frac{d\psi(\lambda)}{d\lambda}\right].
\end{gather*}
Now $Y=\partial_{2}$ satisf\/ies the conditions (\ref{wk}) and the extra operators $\widehat{h}_{4}$, $\widehat{h}_{5}$ can be obtained either by using the formula~(\ref{op}) or directly by minimal quantization of functions $h_{4}$, $h_{5}$ in~(\ref{jola1}). The result is (up to a sign)
\begin{gather*}
\widehat{h}_{4}=\frac{1}{4}\hslash^{2}\partial_{2}^{2}-\alpha_{3}+\alpha_{4}r_{1},\qquad \widehat{h}_{5}=-\frac{1}{4}\hslash^{2}\partial_{3}^{2}+\frac{4\alpha_{-1}}{r_{3}^{2}}.
\end{gather*}
\end{Example}

If we want to perform the separable quantization of superintegrable systems obtained by the St\"ackel transform, as in Section~\ref{section5}, we have two cases: either the system~-- after the St\"ackel transform~-- belongs again to the same class~(\ref{BenSC}) or belongs to the other class, given by the separation relations (\ref{4.2}) that are dif\/ferent from~(\ref{BenSC}) as soon as $k\neq-1$. Again by Remark~\ref{jakkwantowac}, in the f\/irst case the natural choice of the metric in which we perform the minimal quantization is to take $\tilde{G}=\widetilde{A}_{1}$, i.e., $\tilde{G}$ as given by~(\ref{Gbar}). In the second case we have to use the metric given by (\ref{metryka}) which in our case is given by~(\ref{MQ}), i.e., by $G=\varphi^{1-\frac{2}{n}}\tilde{A}_{1}$ with $\varphi=-V_{1}^{(k)}$.

\begin{Example} Let us now minimally quantize the St\"ackel Hamiltonians $\tilde{h}_{1}$, $\tilde{h}_{2}$, $\tilde{h}_{3}$ given in~(\ref{s1}), obtained through a St\"ackel transform in Example~\ref{4}, generated by the separation relations (\ref{sep1}) with $k=-1$, that is \ by
\begin{gather*}
\tilde{h}_{1}\lambda_{i}^{-1}+\tilde{\alpha}\lambda_{i}^{2}+\tilde{h}_{2}\lambda_{i}+\tilde{h}_{3}=\frac{1}{2}\lambda_{i}\mu_{i}^{2}, \qquad
 i=1,2,3.
\end{gather*}
The metric associated with $\tilde{h}_{1}$%
\begin{gather}
\tilde{G}=\frac{1}{4}r_{3}^{2}\left(
\begin{matrix}
0 & 1 & 0\\
1 & 0 & 0\\
0 & 0 & 1
\end{matrix}
\right) \label{metric}
\end{gather}
is of constant curvature as~-- by Lemma~\ref{lemat}~-- after applying transformation (\ref{tran}), in the new separation coordinates the separation relations (\ref{sep1}) turns to
\begin{gather*}
\tilde{\alpha}\lambda_{i}^{-1}+\tilde{h}_{1}\lambda_{i}^{2}+\tilde{h} _{3}\lambda_{i}+\tilde{h}_{2}=\frac{1}{2}\lambda_{i}^{4}\mu_{i}^{2}, \qquad i=1,2,3
\end{gather*}
and belong again to the class~(\ref{BenSC}). Thus, by Remark~\ref{jakkwantowac}, we have to perform the minimal quantization of this system with respect to the original metric $\widetilde{A}_{1}$ of the system which is just~(\ref{metric}). Observing that $\sqrt{\vert \tilde{g} \vert}=8/r_{3}^{3}$, we obtain the following quantum superintegrable system (we use the second expression in~(\ref{hq})):
\begin{gather*}
\widehat{\tilde{h}}_{1} =-\frac{1}{4}\hslash^{2}r_{3}^{2}\left( \frac{1}{2}r_{3}\partial_{3}\frac{1}{r_{3}}\partial_{3}+\partial_{1}\partial
_{2}\right) -\frac{1}{4}\tilde{\alpha}r_{3}^{2},\\
\widehat{\tilde{h}}_{2} =\frac{1}{4}\hslash^{2}\left( -2\partial_{1}^{2}+2\partial_{2}r_{2}\partial_{2}+\partial_{1}r_{1}\partial_{2}
+r_{1}\partial_{2}\partial_{1}+r_{3}\partial_{2}\partial_{3}+r_{3}^{3}\partial_{3}\frac{1}{r_{3}^{2}}\partial_{2}\right) +\tilde{\alpha}r_{1},\\
\widehat{\tilde{h}}_{3} =\frac{1}{8}\hslash^{2}\left[ -r_{3}^{2}\partial_{2}^{2}+(r_{1}^{2}+4r_{2})\partial_{1}\partial_{2}+\partial_{1}
r_{1}^{2}\partial_{2}+4\partial_{2}r_{2}\partial_{1}+2r_{3}\partial_{1}\partial_{3}+2r_{3}^{3}\partial_{3}\frac{1}{r_{3}^{2}}\partial_{1}\right.\\
\left.\hphantom{\widehat{\tilde{h}}_{3} =}{} + r_{1}r_{3}\partial_{2}\partial_{3}+r_{1}r_{3}^{3}\partial_{3}\frac{1}{r_{3}^{2}}\partial_{1}\right] +\frac{1}{4}\tilde{\alpha}\big(r_{1}^{2}+4r_{2}\big) ,\\
\widehat{\tilde{h}}_{4} =\frac{1}{2}\hslash^{2}\partial_{2}^{2},\qquad \widehat{\tilde{h}}_{5} =\hslash^{2}\partial_{1}\partial_{2}+\tilde{\alpha}.
\end{gather*}
\end{Example}

\begin{Example}Let us f\/inally minimally quantize the St\"ackel Hamiltonians $\tilde{h}_{1}$, $\tilde{h}_{2}$, $\tilde{h}_{3}$ given in~(\ref{s3}), obtained through a~St\"ackel transform in Example~\ref{4}) and generated by separation relations~(\ref{sep1}) with $k=4$
\begin{gather*}
\tilde{h}_{1}\lambda_{i}^{4}+\tilde{\alpha}\lambda_{i}^{2}+\tilde{h}_{2}\lambda_{i}+\tilde{h}_{3}=\frac{1}{2}\lambda_{i}\mu_{i}^{2},\qquad i=1,2,3.
\end{gather*}
The metric associated with $\tilde{h}_{1}$
\begin{gather*}
\tilde{G}=\frac{1}{\frac{3}{4}r_{1}^{2}-r_{2}}\left(
\begin{matrix}
0 & 1 & 0\\
1 & 0 & 0\\
0 & 0 & 1
\end{matrix}
\right)
\end{gather*}
is conformally f\/lat. By Remark~\ref{jakkwantowac}, we have to perform minimal quantization of this system with respect to the metric~(\ref{MQ}) given by
\begin{gather*}
G=\big( -V_{1}^{(4)}\big) ^{1-\frac{2}{3}}\tilde{G}=\left( r_{2}-\frac{3}{4}r_{1}^{2}\right) ^{-\frac{2}{3}}\left(
\begin{matrix}
0 & 1 & 0\\
1 & 0 & 0\\
0 & 0 & 1
\end{matrix}
\right).
\end{gather*}
Observing that $\sqrt{\vert g\vert }=V_{1}^{(4)}=r_{2}-\frac{3}{4}r_{1}^{2}$, we obtain the following quantum operators (we use again the second expression in~(\ref{hq})):
\begin{gather*}
\widehat{\tilde{h}}_{1} =\frac{1}{2}\hslash^{2}\left( r_{2}-\frac{3}{4}r_{1}^{2}\right) ^{-1}\big( 2\partial_{1}\partial_{2}+\partial_{3}^{2}\big) +\frac{\tilde{\alpha}}{r_{2}-\frac{3}{4}r_{1}^{2}},\\
\widehat{\tilde{h}}_{2} =-\frac{1}{2}\hslash^{2}\left( r_{2}-\frac{3}{4}r_{1}^{2}\right) ^{-1}\sum_{i,j}\partial_{i}B_{2}^{ij}\partial_{j}-\tilde{\alpha}\frac{r_{1}r_{2}+\frac{1}{4}r_{1}^{3}+\frac{1}{4}r_{3}^{2}}{r_{2}-\frac{3}{4}r_{1}^{2}},\\
\widehat{\tilde{h}}_{3} =-\frac{1}{2}\hslash^{2}\left( r_{2}-\frac{3}{4}r_{1}^{2}\right) ^{-1}\sum_{i,j}\partial_{i}B_{3}^{ij}\partial_{j}
+\frac{1}{4}\tilde{\alpha}\frac{r_{1}r_{3}^{2}}{r_{2}-\frac{3}{4}r_{1}^{2}},\\
\widehat{\tilde{h}}_{4} =-\frac{1}{2}\hslash^{2}\left( r_{2}-\frac{3}{4}r_{1}^{2}\right) ^{-1}\left[ \partial_{1}r_{1}\partial_{2}+r_{1}\partial_{2}\partial_{1}-\partial_{2}\left( r_{2}-\frac{3}{4}r_{1}^{2}\right) \partial_{2}+r_{1}\partial_{3}^{2}\right] -\tilde{\alpha}\frac{r_{1}}{r_{2}-\frac{3}{4}r_{1}^{2}},\\
\widehat{\tilde{h}}_{5} =-\frac{1}{2}\hslash^{2}\partial_{3}^{2},
\end{gather*}
where
\begin{gather*}
B_{2} =\left(
\begin{matrix}r_{2}-\frac{3}{4}r_{1}^{2} & \frac{3}{2}r_{1}r_{2}-\frac{1}{8}r_{1}^{3}+\frac{1}{4}r_{3}^{2} & 0\vspace{1mm}\\
\frac{3}{2}r_{1}r_{2}-\frac{1}{8}r_{1}^{3}+\frac{1}{4}r_{3}^{2} &
-r_{2}\left( r_{2}-\frac{3}{4}r_{1}^{2}\right) & -\frac{1}{2}r_{3}\left(r_{2}-\frac{3}{4}r_{1}^{2}\right) \vspace{1mm}\\
0 & -\frac{1}{2}r_{3}\left( r_{2}-\frac{3}{4}r_{1}^{2}\right) & 2r_{1}r_{2}-\frac{1}{2}r_{1}^{3}+\frac{1}{4}r_{3}^{2}
\end{matrix}
\right), \\
B_{3} =\left(
\begin{matrix}
0 & -\frac{1}{4}r_{1}r_{3}^{2} & -\frac{1}{2}r_{3}\left( r_{2}-\frac{3}{4}r_{1}^{2}\right) \vspace{1mm}\\
-\frac{1}{4}r_{1}r_{3}^{2} & \frac{1}{4}r_{3}^{2}\left( r_{2}-\frac{3}{4}r_{1}^{2}\right) & -\frac{1}{4}r_{1}r_{3}\left( r_{2}-\frac{3}{4}r_{1}^{2}\right) \vspace{1mm}\\
-\frac{1}{2}r_{3}\left( r_{2}-\frac{3}{4}r_{1}^{2}\right) & -\frac{1}{4}r_{1}r_{3}\left( r_{2}-\frac{3}{4}r_{1}^{2}\right) & -\frac{1}{2}r_{1}^{2}r_{2}+r_{2}^{2}-\frac{1}{4}r_{1}r_{3}^{2}-\frac{3}{16}r_{1}^{4}
\end{matrix}
\right)
\end{gather*}
with $B=\sqrt{\vert g\vert }A$ in~(\ref{hq}). It can be checked that it is again a quantum superintegrable system.
\end{Example}

\pdfbookmark[1]{References}{ref}
\LastPageEnding

\end{document}